\documentclass[aps,10pt,onecolumn,prx,longbibliography,nofootinbib,superscriptaddress]{revtex4-2}
\usepackage[T1]{fontenc}
\usepackage{times}
\usepackage[a4paper,margin=1in]{geometry}
\usepackage{amsmath,amssymb,mathtools,amsthm}
\usepackage{booktabs}
\usepackage{xcolor}
\usepackage{pgfplots}
\pgfplotsset{compat=1.18}
\usepackage{microtype}
\usepackage[colorlinks=true,linkcolor=teal,citecolor=teal,urlcolor=teal]{hyperref}
\providecommand{\doi}[1]{\href{https://doi.org/#1}{doi:\nolinkurl{#1}}}
\hypersetup{pdftitle={Canonical purification and nonadditivity in entanglement of purification},
 pdfauthor={Zahra Baghali Khanian and Amir-Reza Negari}}
\usepackage{enumitem}
\usepackage{needspace}
\usepackage{placeins}
\setlist{itemsep=2pt,topsep=4pt}
\newcommand{\Tr}{\operatorname{Tr}}
\newcommand{\EP}{E_{\mathrm P}}
\newcommand{\EPr}{E_{\mathrm P}^{\infty}}
\newcommand{\ket}[1]{\lvert #1\rangle}
\newcommand{\bra}[1]{\langle #1\rvert}
\newcommand{\proj}[1]{\ket{#1}\!\bra{#1}}
\newcommand{\id}{I}
\newcommand{\Ps}{P_{\mathrm{sym}}}
\newcommand{\norm}[1]{\lVert #1\rVert}
\newcommand{\hbin}{h_2}

\newtheorem{theorem}{Theorem}[section]
\newtheorem{lemma}[theorem]{Lemma}
\newtheorem{proposition}[theorem]{Proposition}
\newtheorem{appendixlemma}{Lemma}[section]
\allowdisplaybreaks[2]
\makeatletter
\let\auto@bib\@empty
\let\auto@bib@innerbib\@empty
\makeatother

\begin{document}

\title{Entanglement of purification for Werner states: canonical purification and nonadditivity}

\author{Zahra Baghali Khanian}
\affiliation{Perimeter Institute for Theoretical Physics,
Waterloo, Ontario N2L 2Y5, Canada}
\affiliation{Institute for Quantum Computing, University of Waterloo,
Waterloo, Ontario N2L 3G1, Canada}

\author{Amir-Reza Negari}
\affiliation{Institute for Quantum Computing, University of Waterloo,
Waterloo, Ontario N2L 3G1, Canada}

\begin{abstract}
We prove that the von Neumann entanglement of purification is
nonadditive. For two-qubit Werner states $W(f)$ with singlet
weight $1/64\le f\le1/4$, we show that canonical purification is
globally optimal: its entanglement, also known as reflected
entropy, gives the exact entanglement of purification. In
contrast, the regularized entanglement of purification lies
strictly below the reflected entropy for $0<f<f_\ast$, where
$f_\ast\approx0.017$. Together, these results establish
nonadditivity throughout $1/64\le f<f_\ast$. Our
canonical-optimality proof generalizes the proof in \cite{NegariKhanian}, which established R\'enyi
nonadditivity for $0\le\alpha<1$ using classical two-qubit
states. We thus resolve the additivity question at
$\alpha=1$. Remarkably, nonadditivity is again witnessed
by separable states and no entanglement in the underlying states
is required.
\end{abstract}

\maketitle
\raggedbottom

\begingroup
\setlength{\parskip}{2pt plus 0.4pt minus 0.4pt}

\section{Introduction}\label{sec:introduction}
\clubpenalty=10000
\widowpenalty=10000

The resources needed to prepare a correlated quantum state are not
captured by its entanglement alone. Entanglement of purification,
introduced by Terhal, Horodecki, Leung, and DiVincenzo~\cite{THLD},
quantifies total correlations by minimizing the entanglement of a
purification across the cut that groups each party with its own
ancillary system. This perspective on preparation complements the
characterization of total correlations through the noise required to
erase them~\cite{GroismanPopescuWinterCorrelations}. The regularized
entanglement of purification is the asymptotic entanglement cost of
state preparation under local operations with communication
sublinear in the number of copies~\cite{THLD}. The additivity question
is whether optimizing one copy already gives the cost per
copy, or whether a joint purification can offer a collective saving.
Here we prove such a saving for a continuous interval of separable
two-qubit Werner states.

The importance of regularization extends beyond this preparation
problem. The regularized entanglement of purification appears in the
optimal visible-compression rate of
mixed-state ensembles~\cite{HayashiVisibleCompression} and in
unassisted quantum rate-distortion
functions~\cite{DattaHsiehWildeRateDistortion,BaghaliKhanianKuroiwaLeungRateDistortion}.
R\'enyi versions of entanglement of purification also enter strong
converse bounds for visible compression~\cite{BaghaliKhanianStrongConverse},
while the single-copy quantity is related to quantum advantage in
dense coding~\cite{HorodeckiPianiDenseCoding}. These connections make
the relation between the single-copy and asymptotic quantities a
basic question about the operational meaning of quantum correlations.

Despite these operational roles, entanglement of purification is
difficult to evaluate. Bounds and exact results are available for
selected state families~\cite{ChristandlWinter,BagchiPatiProperties},
but a useful purification generally supplies only an upper bound.
This difficulty also matters in many-body physics: optimizing the
purifying degrees of freedom can reduce the entanglement needed in
matrix-product representations of thermal and time-evolved
states~\cite{HauschildPurifications,PokartPurifications}, and suitable
bounds on R\'enyi entanglement of purification give rigorous criteria
for efficient matrix-product density-operator
approximations~\cite{JarkovskyMPDO}. Gaussian approaches have enabled
calculations in free scalar theories and free conformal field
theories~\cite{BhattacharyyaFreeScalar,CamargoFreeCFT,WindtGaussian}.
Numerical studies also show that optimized purifications can break
the symmetry exchanging the two physical
subsystems~\cite{BhattacharyyaSymmetryBreaking}. Symmetry of the target
state alone therefore does not establish optimality of a symmetric
purification.

A distinguished candidate is the canonical purification, obtained by
vectorizing the positive square root of the density operator. Its
entanglement is the reflected entropy~\cite{DuttaFaulkner}, which
requires no optimization and is always an upper bound on
entanglement of purification. Both quantities have attracted interest
in quantum field theory and holography. In holographic theories,
entanglement of purification was conjectured to equal the entanglement wedge
cross-section~\cite{UmemotoTakayanagi,NguyenSpinHolography}, whereas
reflected entropy gives twice that cross-section at leading
semiclassical order~\cite{DuttaFaulkner}. The former identification
has been established for a class of random tensor networks at large
bond dimension~\cite{AkersEPRTN}. For widely separated spherical
regions in a conformal-field-theory vacuum, both quantities exhibit a
logarithmic enhancement relative to mutual
information~\cite{CamargoLongDistance}. These connections do not make
the two quantities interchangeable: reflected entropy can increase
under partial trace~\cite{HaydenLemmSorce}, and even one-half of it is
not a universal lower bound on entanglement of
purification~\cite{CouchEtAl}. Identifying states for which the
canonical upper bound is attained is therefore a substantive
optimization problem.

Two-qubit Werner states~\cite{Werner} bring this problem into a simple
one-parameter setting. The original study of entanglement of
purification found numerical evidence that the canonical
purification is optimal over an interval of singlet weights ending
at the maximally mixed state~\cite{THLD}. Chen and
Winter~\cite{ChenWinter} subsequently proved that the regularized
entanglement of purification minus the state entropy is convex.
Their resulting ensemble bound, evaluated at singlet weight
$f=0.005$, lies below the numerically estimated single-copy value and
provides strong evidence for nonadditivity. To turn this strategy
into an analytical separation, one needs a rigorous single-copy
lower bound that exceeds an achievable asymptotic rate. An AI-generated proof of nonadditivity using this principle was presented in \cite{Krohn} and its accompanying artifact~\cite{KrohnArtifact}. We show that this proof is incorrect and provide an explicit counterexample to Lemma~2, invalidating the single-copy lower-bound argument on which the proof rests. 

\Needspace{6\baselineskip}
In this paper, we establish such a separation. For the two-qubit
Werner state $W(f)$ with singlet weight $f$, we prove that the
canonical purification is optimal throughout $1/64\le f\le1/4$.
Thus its reflected entropy gives the exact entanglement of
purification on this interval. We also show that the regularized
quantity lies strictly below the canonical value for $0<f<f_\ast$,
where $f_\ast\approx0.017086$. Combining the two results gives
\begin{equation}\label{eq:intro-nonadditivity}
  \EPr(W(f))<\EP(W(f)),
  \qquad 1/64\le f<f_\ast.
\end{equation}
All states in this interval are separable. Collective preparation
therefore reduces the entanglement cost even when the target states
themselves contain no entanglement. Figure~\ref{fig:werner-intervals}
summarizes the two bounds and the interval on which they prove
nonadditivity.

The main technical step is the single-copy lower bound. Projecting an
arbitrary purification onto the symmetric subspace of the physical
qubits produces a state constrained by symmetric extendibility.
The two-qubit extension criterion~\cite{CJKLZ}, together with the
overlap between the original and projected purifications, yields a
joint constraint on the largest Schmidt probability and the
remaining probabilities. Grouping the tail reduces the entropy
comparison to four probabilities. We prove that the canonical
spectrum minimizes this entropy throughout $1/64\le f\le1/4$.
The asymptotic improvement then follows from the ensemble bound of
Chen and Winter~\cite{ChenWinter}.

\par
\endgroup

\begin{figure}[!htb]
\centering
\begin{tikzpicture}[x=1cm,y=1cm,
  font=\fontsize{9.5}{11}\selectfont,
  line cap=butt,
  bound/.style={line width=1.8pt},
  endpoint/.style={line width=0.85pt},
  axis/.style={draw=black!48,line width=0.45pt}]
  \definecolor{WernerBlue}{HTML}{416E97}
  \definecolor{WernerTeal}{HTML}{34786A}
  \definecolor{WernerPurple}{HTML}{75418C}
  \pgfmathsetmacro{\plotStart}{3.80}
  \pgfmathsetmacro{\previousWidth}{4000*(0.01708595559070344-1/64)}
  \pgfmathsetmacro{\overlapCenter}{\plotStart+2.25+0.5*\previousWidth}
  \pgfmathsetmacro{\fzeroX}{\overlapCenter-0.35*\previousWidth}
  \pgfmathsetmacro{\fstarX}{\overlapCenter+0.35*\previousWidth}
  \pgfmathsetmacro{\leftEnd}{\plotStart+1.2}
  \pgfmathsetmacro{\middleStart}{\plotStart+1.8}
  \pgfmathsetmacro{\middleEnd}{\plotStart+2.25+\previousWidth+0.6}
  \pgfmathsetmacro{\rightStart}{\middleEnd+0.6}
  \pgfmathsetmacro{\quarterX}{\rightStart+1.2}

  \fill[WernerPurple!5] (\fzeroX,-1.53) rectangle (\fstarX,1.48);
  \foreach \xx in {\fzeroX,\fstarX}{
    \draw[black!28,line width=0.4pt,dash pattern=on 1.5pt off 2pt]
      (\xx,-1.53) -- (\xx,-0.72)
      (\xx,0.08) -- (\xx,1.48);
  }

  \node[anchor=west,inner sep=0pt,text=black!85] at (0,1.30)
    {Regularized bound};
  \node[anchor=west,inner sep=0pt,text=WernerTeal,
        font=\fontsize{9}{11}\selectfont] at (0,0.91)
    {$\EPr(W(f))<C(f)$};
  \node[anchor=west,inner sep=0pt,text=WernerPurple,
        font=\fontsize{10}{12}\selectfont\bfseries] at (0,0.14)
    {Nonadditivity};
  \node[anchor=west,inner sep=0pt,text=WernerPurple,
        font=\fontsize{9}{11}\selectfont] at (0,-0.25)
    {$\EPr(W(f))<\EP(W(f))$};
  \node[anchor=west,inner sep=0pt,text=black!85] at (0,-1.19)
    {Canonical optimality};
  \node[anchor=west,inner sep=0pt,text=WernerBlue,
        font=\fontsize{9}{11}\selectfont] at (0,-1.58)
    {$\EP(W(f))=C(f)$};

  \draw[WernerTeal,bound]
    (\plotStart,1.13) -- (\leftEnd,1.13)
    (\middleStart,1.13) -- (\fstarX,1.13);
  \draw[WernerTeal,fill=white,endpoint]
    (\plotStart,1.13) circle[radius=2.5pt]
    (\fstarX,1.13) circle[radius=2.5pt];
  \draw[WernerBlue,bound]
    (\fzeroX,-1.36) -- (\middleEnd,-1.36)
    (\rightStart,-1.36) -- (\quarterX,-1.36);
  \fill[WernerBlue] (\fzeroX,-1.36) circle[radius=2.5pt];
  \fill[WernerBlue] (\quarterX,-1.36) circle[radius=2.5pt];

  \draw[axis] (\plotStart,0) -- (\leftEnd,0);
  \draw[axis] (\middleStart,0) -- (\middleEnd,0);
  \draw[axis,->] (\rightStart,0) -- ({\quarterX+0.35},0);
  \draw[WernerPurple,bound] (\fzeroX,0) -- (\fstarX,0);
  \fill[WernerPurple] (\fzeroX,0) circle[radius=2.5pt];
  \draw[WernerPurple,fill=white,endpoint]
    (\fstarX,0) circle[radius=2.5pt];

  \foreach \xx/\yy in {\leftEnd/1.13,\leftEnd/0,\middleEnd/0,\middleEnd/-1.36}{
    \draw[black!52,line width=0.6pt]
      ({\xx+0.22},{\yy-0.055}) -- ({\xx+0.28},{\yy+0.055})
      ({\xx+0.34},{\yy-0.055}) -- ({\xx+0.40},{\yy+0.055});
  }
  \foreach \xx in {\plotStart,\quarterX}{
    \draw[axis] (\xx,0.065) -- (\xx,-0.065);
  }

  \node[anchor=north,inner sep=0pt] at (\plotStart,-0.23) {$0$};
  \node[anchor=north,inner sep=0pt,text=WernerPurple]
    at (\fzeroX,-0.23) {$1/64$};
  \node[anchor=north,inner sep=0pt,text=WernerPurple]
    at (\fstarX,-0.23) {$f_\ast$};
  \node[anchor=north,inner sep=0pt,text=black!62,
        font=\fontsize{8.5}{10}\selectfont]
    at (\fstarX,-0.59) {$\approx0.017086$};
  \node[anchor=north,inner sep=0pt] at (\quarterX,-0.23) {$1/4$};
  \node[anchor=west,inner sep=0pt] at ({\quarterX+0.47},0) {$f$};

  \draw[WernerBlue!40,rounded corners=6pt,line width=0.65pt]
    ([xshift=-8.5pt,yshift=-8.5pt]current bounding box.south west)
    rectangle
    ([xshift=8.5pt,yshift=8.5pt]current bounding box.north east);

\end{tikzpicture}
\caption{Canonical optimality and nonadditivity for two-qubit Werner
states. The green interval gives $\EPr(W(f))<C(f)$ for $0<f<f_\ast$
(Proposition~\ref{prop:upper}), where $C(f)$ denotes the entanglement entropy of the canonical purification
of $W(f)$ across the bipartition $AA':BB'$, equivalently the reflected
entropy of $W(f)$.  The blue interval gives
$\EP(W(f))=C(f)$ for $1/64\le f\le1/4$
(Proposition~\ref{prop:overlap-optimum}). Their intersection, shown in
purple, proves nonadditivity on $[1/64,f_\ast)$.
Filled endpoints are included and open endpoints excluded.
The axis is broken on both sides to enlarge the overlap;
unmarked intervals are not classified by these results.}
\label{fig:werner-intervals}
\end{figure}
\FloatBarrier

Our result complements the R\'enyi nonadditivity theorem of
Ref.~\cite{NegariKhanian}. That work establishes canonical
optimality for classical two-qubit states at every R\'enyi order and
exhibits two-copy violations for $0\le\alpha<1$. Here we establish
nonadditivity at the von Neumann order $\alpha=1$ using separable
Werner states. These states are not diagonal in a local product
basis, so the corresponding von Neumann question for classical
states remains open. At R\'enyi order $2$, additivity has been
established for families associated with depolarizing and
transpose-depolarizing channels~\cite{FarajiKhanian}.

\medskip
After the preliminaries in Section~\ref{sec:result}, we establish the
regularized bound in Section~\ref{sec:upper} and prove canonical
optimality in Section~\ref{sec:overlap}.
Section~\ref{sec:discussion} discusses the implications and open
questions. The appendices contain the detailed calculations.

\section{Preliminaries}\label{sec:result}

All systems are finite dimensional, and entropies are measured in
bits. Logarithms are to base two unless written as $\ln$, with
$0\log_2 0=0$. The von Neumann, Shannon, and binary entropies are
\[
 \textstyle
 S(\rho)=-\Tr\!\left(\rho\log_2\rho\right),
 \qquad H(p)=-\sum_i p_i\log_2 p_i,
 \qquad \hbin(x)=H(x,1-x).
\]

\subsection{Purification and regularization}

A purification of $\rho_{AB}$ is a pure state
$\ket\psi_{AA'BB'}$ with $\Tr_{A'B'}\proj\psi=\rho_{AB}$,
where $A'$ and $B'$ are local ancillary systems. Its entanglement
across $AA':BB'$ is $S(\Tr_{BB'}\proj\psi)=H(p)$, where $p$ is
its vector of Schmidt probabilities. Minimizing this cost over all
purifications defines the entanglement of purification\cite{THLD}:
\begin{equation}\label{eq:ep}
 \EP(\rho_{AB})=\inf_{\psi}
 S\!\left(\Tr_{BB'}\proj\psi\right).
\end{equation}
Arbitrary finite ancillary dimensions are allowed. For
$d_A=\dim A$ and $d_B=\dim B$, the infimum is attained with
$\dim A'\le d_A d_B$ and $\dim B'\le(d_A d_B)^2$, or with the
ancillas interchanged\cite[Lemma~1]{THLD}. Thus dimensions
$4$ and $16$ suffice for two-qubit states.

Tensoring purifications shows that $\EP$ is subadditive. Hence its
regularization exists and satisfies
\[
 \EPr(\rho)
 =\lim_{n\to\infty}\frac{\EP(\rho^{\otimes n})}{n}
 =\inf_{n\ge1}\frac{\EP(\rho^{\otimes n})}{n}
 \le\EP(\rho).
\]
Strict inequality holds if and only if
$\EP(\rho^{\otimes n})<n\EP(\rho)$ for some finite $n\ge2$.

Vectorizing $\sqrt\rho$ in a product basis gives the canonical
purification $\ket{\sqrt\rho}$, with reference copies $A'$ and $B'$
assigned to the corresponding parties. Its entanglement across
$AA':BB'$ is the reflected entropy $S_R(A:B)_\rho$\cite{DuttaFaulkner}.
Thus $\EP(\rho)\le S_R(A:B)_\rho$, and canonical optimality means
equality in this bound.

\subsection{Werner states and the canonical spectrum}

The two-qubit Werner family\cite{Werner} is parametrized by its
singlet probability:
\begin{equation}\label{eq:werner}
 W(f)=f\proj{\Psi^-}+\frac{1-f}{3}\Ps,
 \qquad 0\le f\le1.
\end{equation}
Here $\ket{\Psi^-}=(\ket{01}-\ket{10})/\sqrt2$ and
$\Ps=\id_{AB}-\proj{\Psi^-}$ projects onto the symmetric subspace.
Both marginals are $\id/2$, and $W(f)$ is separable exactly when
$f\le1/2$. We consider $0\le f\le1/4$, from the normalized
symmetric projector $W(0)=\Ps/3$ to the product state
$W(1/4)=\id_{AB}/4$. Its state entropy is
$S(W(f))=\hbin(f)+(1-f)\log_2 3$.

For $0\le f\le1/4$, the canonical purification has Schmidt
amplitudes $(a_f,b_f,b_f,b_f)$ across $AA':BB'$, where
\begin{equation}\label{eq:overlap-canonical-amplitudes}
 a_f=\frac{\sqrt f+\sqrt{3(1-f)}}2,
 \qquad
 b_f=\frac{\sqrt{(1-f)/3}-\sqrt f}{2}.
\end{equation}
We denote its entanglement by
\begin{equation}\label{eq:canonical-entropy}
 C(f)=S_R(A:B)_{W(f)}
 =H(a_f^2,b_f^2,b_f^2,b_f^2),
\end{equation}
and write $q(f)=a_f^2$ for its largest Schmidt probability.
Appendix~\ref{app:bell} gives the explicit purification and
its Schmidt decomposition.

\section{A regularized cost below the canonical value}
\label{sec:upper}

Collective preparation can require less entanglement per copy than
canonical purification. To show this, we use an ensemble decomposition
of the target state. For any $\rho=\sum_i p_i\rho_i$, we have \cite{ChenWinter}
\begin{equation}\label{eq:ensemble-regularized-bound}
 \EPr(\rho)
 \le
 \sum_i p_i\EPr(\rho_i)
 +S(\rho)-\sum_i p_iS(\rho_i).
\end{equation}
The right-hand side is an achievable asymptotic entanglement rate.
Its first term is the average entanglement cost of the component
states. The entropy difference is the Holevo information of the
ensemble and contributes an additional rate in this preparation
scheme. The choice of ensemble balances these two contributions.

For the Werner family, we use $W(0)$ as one component because its
optimal purification requires less entanglement than its canonical
purification:
\begin{equation}\label{eq:cheap-endpoint}
 \EP(W(0))=1<C(0).
\end{equation}
The equality is recalled in Appendix~\ref{app:wzero}. Fix
$0<f<1/4$ and assign probability $p$ to $W(0)$. To keep the target
state unchanged, increase the singlet probability of the other
component:
\begin{equation}\label{eq:variable-endpoint-mixture}
 W(f)=p\,W(0)+(1-p)\,W\!\left(\frac{f}{1-p}\right),
 \qquad
 0\le p\le 1-4f.
\end{equation}
This identity follows from the affine dependence of $W(f)$ on $f$.
As $p$ increases, the second component moves toward the product
state $W(1/4)$, and its canonical cost decreases.

Using one ebit to prepare $W(0)$ and canonical purification
for the second component in~\eqref{eq:ensemble-regularized-bound}
gives
\begin{equation}\label{eq:variable-endpoint-cost}
 \EPr(W(f))\le V_f(p)
 :=p+(1-p)C\!\left(\frac{f}{1-p}\right)+S(W(f))-p\log_2 3
 -(1-p)S\!\left(W\!\left(\frac{f}{1-p}\right)\right),
\end{equation}
where $S(W(0))=\log_2 3$. The first line is the
average preparation rate; the second is the Holevo information.

At $p=0$, the ensemble contains only $W(f)$, so the Holevo information
vanishes and $V_f(0)=C(f)$. A negative right derivative therefore
guarantees a lower achievable rate for sufficiently small $p>0$.
Differentiation gives
\begin{equation}\label{eq:initial-endpoint-derivative}
 V_f'(0^+)
 =
 \log_2\frac83
 -2\hbin\!\left(
 \frac{1-\sqrt{3f/(1-f)}}4
 \right).
\end{equation}
The right-hand side is strictly increasing in $f$ and has a unique
zero $f_\ast\approx0.017086$ on $(0,1/4)$. For $0<f<f_\ast$, it is
negative: the decrease in the average preparation rate exceeds the
additional Holevo information. Appendix~\ref{app:upper-proof} gives
the calculation and sign analysis, establishing the following result.

\begin{proposition}\label{prop:upper}
For every $0<f<f_\ast$, all sufficiently small $p>0$ satisfy
\begin{equation}\label{eq:strict-canonical-upper}
 \EPr(W(f))\le V_f(p)<C(f).
\end{equation}
\end{proposition}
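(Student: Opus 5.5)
The plan is to combine the ensemble bound \eqref{eq:ensemble-regularized-bound} with a one-sided derivative computation at $p=0$. The first inequality in \eqref{eq:strict-canonical-upper} is immediate once the ingredients are in place. Apply \eqref{eq:ensemble-regularized-bound} to the decomposition \eqref{eq:variable-endpoint-mixture}, which is valid for $0\le p\le 1-4f$ because $f/(1-p)\in[f,1/4]$. Then bound $\EPr(W(0))\le\EP(W(0))=1$ using \eqref{eq:cheap-endpoint}. Bound $\EPr(W(f/(1-p)))\le\EP(W(f/(1-p)))\le C(f/(1-p))$ using the reflected-entropy upper bound. Together with $S(W(0))=\log_2 3$, this gives exactly $\EPr(W(f))\le V_f(p)$.

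For the strict inequality, I will note that $V_f(0)=C(f)$ and show that $p\mapsto V_f(p)$ is differentiable on $[0,1-4f)$ with $V_f'(0^+)<0$. The strict inequality for all sufficiently small $p>0$ then follows from the definition of the derivative. To compute the derivative, set $g=f/(1-p)$, so $dg/dp=g/(1-p)$ and $g=f$ at $p=0$. Then
\[
V_f'(0)=1-\log_2 3-C(f)+S(W(f))+f\bigl(C'(f)-S'(f)\bigr),
\]
using $S(W(g))=\hbin(g)+(1-g)\log_23$ and the explicit spectrum \eqref{eq:overlap-canonical-amplitudes}. For $C$, I write $C(g)=\hbin(q)+(1-q)\log_23$ with $q=a_g^2$. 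This holds because the three tail probabilities are equal, each being $(1-q)/3$. Then $C'(g)=q'(g)\bigl(\log_2\frac{1-q}{q}-\log_23\bigr)$. The entropy terms should telescope: $-C+fC'$ and $S-fS'$ each reduce to a single logarithmic term, because $\hbin(x)-x\hbin'(x)=-\log_2(1-x)$, and similarly for the quadratic dependence of $q$ on $\sqrt g$.

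The main obstacle I expect is this simplification to the closed form \eqref{eq:initial-endpoint-derivative}, $\log_2\frac83-2\hbin\bigl(\frac{1-\sqrt{3f/(1-f)}}{4}\bigr)$. The step is routine but delicate. I would handle it by writing $b_f^2=(1-q)/3$ and expressing everything in terms of $s=\sqrt{3f/(1-f)}$. In these variables $q$ and $1-q$ become rational in $s$ up to a common factor. One should then recognize that $b_f^2/a_f^2$ relates to the ratio of $(1-s)/4$ to $(3+s)/4$. I would verify the resulting identity by checking it at $f=1/4$ and at $f\to0$, where $V_f'(0^+)=\log_2\frac83-2\hbin(1/4)<0$.

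For the sign analysis, the map $f\mapsto s$ is strictly increasing on $(0,1/4)$, with $s\in(0,1)$. Hence $x=(1-s)/4$ decreases within $(0,1/4)$. Since $\hbin$ is increasing on $(0,1/2)$, the quantity $-2\hbin(x)$ is strictly increasing in $f$. At $f=0$ it gives $\log_2\frac83-2\hbin(1/4)\approx1.415-1.623<0$. At $f=1/4$ we have $x=0$, which gives $\log_2\frac83>0$. So there is a unique zero $f_\ast$, and the derivative is negative on $(0,f_\ast)$. Finally, continuity of $V_f$ near $p=0$ ensures that the strict inequality holds on a small interval $(0,p_0)$.
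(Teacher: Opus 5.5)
Your proposal is correct and follows the paper's route. It uses the ensemble bound~\eqref{eq:ensemble-regularized-bound} with one ebit for $W(0)$ and the canonical purification for $W(f/(1-p))$, notes $V_f(0)=C(f)$, and settles the sign of the closed-form right derivative~\eqref{eq:initial-endpoint-derivative} by the same monotonicity-in-$f$ argument as Appendix~\ref{app:upper-proof}.

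One correction to your sketch of the simplification: $C-fC'$ does not collapse to a single logarithm. Because $fq'(f)=q-\tfrac{3+s}{4}$ rather than $q$, the residual term $\tfrac{3+s}{2}\log_2\tfrac{1-s}{3+s}$ survives. Together with $\log_2\tfrac{1-q}{1-f}=2\log_2\tfrac{1-s}{2}$, it assembles into $2\hbin\bigl(\tfrac{1-s}{4}\bigr)$. This identity (the paper's $q-tq'=(3+z)/4$) has to be derived, since agreement at the two endpoints does not establish it.
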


\begin{proof}
For $0<f<f_\ast$, we have $V_f(0)=C(f)$ and
\[
 \lim_{p\downarrow0}\frac{V_f(p)-C(f)}{p}=V_f'(0^+)<0.
\]
Thus $V_f(p)<C(f)$ for all sufficiently small $p>0$.
Combining this with~\eqref{eq:variable-endpoint-cost} gives the claim.
\end{proof}

This bound on the asymptotic rate is independent of single-copy
optimality. Proposition~\ref{prop:overlap-optimum} proves
$\EP(W(f))=C(f)$ for $1/64\le f\le1/4$. Combining the two results
gives $\EPr(W(f))<\EP(W(f))$ for $1/64\le f<f_\ast$, proving
nonadditivity on this interval.

\section{Optimality of the canonical purification}\label{sec:overlap}

We prove that the canonical purification attains the single-copy
entanglement of purification of $W(f)$ for $1/64\le f\le1/4$.

In this section, we extend the largest-eigenvalue approach of
 \cite{NegariKhanian} to establish canonical optimality for Werner
states. The quantity to compare is the largest eigenvalue of the
reduced purification state $\Tr_{BB'}\proj\psi$, equivalently its
largest Schmidt probability $p_1$ across $AA':BB'$. For the classical
two-qubit states considered in  \cite{NegariKhanian}, the canonical
spectrum has at most two nonzero entries, $(\lambda_+,\lambda_-)$
with $\lambda_+\ge\lambda_-$. The bound $p_1\le\lambda_+$ then shows
that the canonical spectrum majorizes every competitor: its first
entry is at least as large, and its first two entries already sum to
one. This proves canonical optimality. For Werner states, the canonical
purification likewise maximizes the largest eigenvalue of the reduced
state on $AA'$: every purification satisfies $p_1\le a_f^2$,
and the canonical purification attains equality, as proved in
Lemma~\ref{lem:overlap-leading} in Appendix~\ref{app:overlap-leading}.
However, its spectrum has four nonzero entries when $f<1/4$. A
competitor with a smaller largest eigenvalue could concentrate the
remaining probability into fewer entries and potentially lower its
entropy. Our extension controls the leading probability jointly with
the tail, providing the additional information needed to prove the
entropy lower bound. Appendix~\ref{app:overlap-majorization} explains
why a bound on the largest eigenvalue alone cannot establish a
universal majorization comparison in this case.

The proof of the von Neumann entropy bound has two steps. First,
Lemma~\ref{lem:overlap-tail} constrains the leading probability
jointly with the tail. Projecting onto the symmetric subspace
produces an extendible marginal, and the large overlap with this
projection transfers its spectral restriction to the original
purification. Second, Lemma~\ref{lem:overlap-entropy} minimizes the
entropy of the resulting four-entry probability vector. Grouping
the tail cannot increase entropy, so this minimum also bounds the
cost of every purification. When $p_1>1/2$, reducing it increases
the binary contribution $\hbin(p_1)$; the entropy minimum shows
that concentrating the tail cannot save more than this increase
costs. The case $p_1\le1/2$ already has entropy at least one bit.
The canonical spectrum attains the resulting lower bound.

The canonical amplitudes in
Eq.~\eqref{eq:overlap-canonical-amplitudes} satisfy
\begin{equation}\label{eq:overlap-canonical-relations}
 a_f^2+3b_f^2=1,
 \qquad a_f-3b_f=2\sqrt f,
 \qquad a_f\ge b_f\ge0.
\end{equation}
These identities connect the canonical spectrum to the
constraints used below. For an arbitrary purification,
let $p_1\ge p_2\ge\cdots$ denote its Schmidt probabilities
across $AA':BB'$. Its entanglement is $H(p)$, and our task
is to prove $H(p)\ge C(f)$ for $1/64\le f\le1/4$,
without restricting the finite ancillary dimensions. Throughout this section,
$p_1\ge p_2\ge\cdots$ denotes the Schmidt probabilities of an
arbitrary purification across the same cut $AA':BB'$. The ancillary
dimensions are unrestricted apart from being finite.
Equivalently, the $p_i$ are the eigenvalues of
$\Tr_{BB'}\proj\psi$, the Schmidt coefficients are $\sqrt{p_i}$,
and the purification cost is $S(AA')_\psi=H(p)$.

The spectral step uses the two-qubit extension criterion of
 \cite{CJKLZ}. The Werner-state form of the resulting tail
inequality also appears in  \cite{Krohn}; its derivation here is
given directly from the criterion and elementary overlap estimates.
Appendix~\ref{app:overlap} supplies the detailed calculations for
all steps, including the entropy minimization.

\subsection{An overlap bound for the spectral tail}

The next lemma constrains the leading probability jointly with the
tail. It applies to every two-qubit state with maximally mixed marginals,
with $f$ denoting its probability of the physical singlet outcome.
We say that $\sigma_{XY}$ has a symmetric extension in $Y$ if it
is the $XY$ marginal of a state on $XYY'$ invariant under exchange
of $Y$ and $Y'$. In the proof, exchange symmetry of the physical
qubits produces such an extension with $X=A'$, $Y=A$, and $Y'=B$.

\begin{lemma}\label{lem:overlap-tail}
Let $\rho_{AB}$ be a two-qubit state with
$\rho_A=\rho_B=\id/2$, and set
$f=\bra{\Psi^-}\rho_{AB}\ket{\Psi^-}<1$.
For every finite-dimensional purification of $\rho_{AB}$, let
$p_1\ge p_2\ge\cdots$ be its Schmidt probabilities across
$AA':BB'$, padded with zeros, and $Q=\sum_{i\ge4}p_i$. Then
\begin{equation}\label{eq:overlap-direct-tail}
 \sqrt{p_1}-\sqrt{p_2}-\sqrt{p_3}-\sqrt Q\le2\sqrt f.
\end{equation}
\end{lemma}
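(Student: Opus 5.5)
The plan is a perturbation argument: split an arbitrary purification into a part that is exchange symmetric on the physical qubits and a small antisymmetric remainder. The symmetric part should satisfy \eqref{eq:overlap-direct-tail} with right-hand side $0$, and the remainder should cost at most $2\sqrt f$.

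\emph{Step 1: decomposition.} Fix a purification $\ket\psi_{AA'BB'}$ of $\rho_{AB}$. Let $\Pi_-=\proj{\Psi^-}_{AB}\otimes\id_{A'B'}$ and $\Pi_+=\Ps\otimes\id_{A'B'}$. Then
\[
 \ket\psi=\Pi_+\ket\psi+\Pi_-\ket\psi,
 \qquad \norm{\Pi_-\ket\psi}^2=\bra{\Psi^-}\rho_{AB}\ket{\Psi^-}=f .
\]
Set $\ket\phi=\Pi_+\ket\psi/\sqrt{1-f}$, which is well defined because $f<1$. Regard $\ket\psi$, $\Pi_+\ket\psi$ and $\Pi_-\ket\psi$ as matrices $M_\psi$, $M_+$, $M_-$ from $BB'$ to $AA'$. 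The singular values of $M_\psi$ are $\sqrt{p_i}$, and $\norm{M_-}_2=\sqrt f$. Mirsky's inequality for sorted singular values, with zero padding, then gives
\[
 \norm{\sqrt{p}-s(M_+)}_2\le\sqrt f,
\]
where $s(M_+)=\sqrt{1-f}\,\sqrt{r}$ and $r_1\ge r_2\ge\cdots$ are the Schmidt probabilities of $\phi$ across $AA':BB'$.

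\emph{Step 2: Lipschitz estimate.} On sorted nonnegative vectors $x$, define
\[
 L(x)=x_1-x_2-x_3-\Bigl(\textstyle\sum_{i\ge4}x_i^2\Bigr)^{1/2}.
\]
Applying the triangle inequality to the tail norm gives
\[
 |L(x)-L(y)|\le|x_1-y_1|+|x_2-y_2|+|x_3-y_3|+\norm{x_{\ge4}-y_{\ge4}}_2 .
\]
Cauchy--Schwarz with the weight vector $(1,1,1,1)$ bounds the right side by $2\norm{x-y}_2$. Hence
\[
 L(\sqrt p)\le\sqrt{1-f}\,L(\sqrt r)+2\sqrt f ,
\]
and since $L(\sqrt p)$ is exactly the left-hand side of \eqref{eq:overlap-direct-tail}, the lemma follows once $L(\sqrt r)\le0$ for every state $\phi$ supported on $\Ps$.

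\emph{Step 3: the symmetric case, which I expect to be the main obstacle.} Since $\phi$ is invariant under the swap $A\leftrightarrow B$, its marginal $\phi_{A'AB}$ is a symmetric extension of $\sigma_{A'A}=\Tr_{BB'}\proj\phi$ in $A$, with $Y=A$ and $Y'=B$. The spectrum of $\sigma_{A'A}$ is $r$. I need to convert this into $\sqrt{r_1}\le\sqrt{r_2}+\sqrt{r_3}+\bigl(\sum_{i\ge4}r_i\bigr)^{1/2}$.

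The difficulty is that $A'$ is not a qubit, whereas the criterion of \cite{CJKLZ} is stated for two qubits. My first attempt is to reduce $A'$: expand $\ket\phi$ in the Schmidt vectors of $\sigma_{AA'}$, keep the top three Schmidt components, and lump the rest into a single normalized component of weight $\sum_{i\ge4}r_i$. This coarse-graining can only weaken the desired inequality, because $\sqrt{\sum_{i\ge4}r_i}\le\sum_{i\ge4}\sqrt{r_i}$. Next I would compress $A'$ onto a two-dimensional subspace that contains the $A'$-support relevant to the top eigenvector. This should produce an effective two-qubit state $\tilde\sigma_{A'A}$ that still has a symmetric extension in $A$.

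For that state I would apply the spectral form of the extension condition,
\[
 \Tr\tilde\sigma_A^2\ge\Tr\tilde\sigma^2-4\sqrt{\det\tilde\sigma}.
\]
Combined with the constraint that the $A$-marginal, a qubit, has both eigenvalues at most about $1/2$, this should force the top eigenvalue to be dominated by the remaining ones in the square-root sense stated above. If the compression does not preserve enough of the spectrum, the fallback is a direct argument. The fallback uses that $\sigma_{A'A}$ is symmetric extendible in the qubit $A$ to bound the overlap of its top eigenvector $\ket{\alpha}_{A'A}$ with the remaining eigenvectors. This is done via the swap-invariant vector $\ket\phi$, by writing $\sqrt{r_1}=|\bra{\alpha}\bra{\beta}\ket\phi|$ and using $F\ket\phi=\ket\phi$ for the swap $F$ of $A$ and $B$ to redistribute that amplitude onto the other Schmidt terms.
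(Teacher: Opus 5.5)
Your Steps~1 and~2 are correct, and they give a different transfer argument from the paper's. The paper uses $\langle\psi|\chi\rangle=\sqrt{1-f}$, the Schmidt-overlap bound $|\langle\psi|\chi\rangle|\le\sum_i s_it_i$, and a half-space argument with $e=(1,-1,-1,-1)/2$. You instead use Mirsky's inequality and the fact that $L$ is $2$-Lipschitz in $\ell_2$. Both give exactly $2\sqrt f$, and yours is the more modular of the two.

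The gap is Step~3, which is the real content of the lemma and which you leave at ``should produce'' and ``should force''. First, the reduction you state at the end of Step~2 is false as written. The vector $\ket{00}_{AB}\ket{00}_{A'B'}$ is supported on $\Ps$, swap invariant and trivially extendible, yet has $L(\sqrt r)=1$. What you need, and never derive, is that $\phi$ has physical marginals exactly $\id/2$; this is where the hypothesis $\rho_A=\rho_B=\id/2$ enters. It follows because the antisymmetric subspace is one-dimensional, so $\proj{\Psi^-}\rho_{AB}\proj{\Psi^-}=f\proj{\Psi^-}$. Hence $\rho_{AB}+V\rho_{AB}V=2\Ps\rho_{AB}\Ps+2f\proj{\Psi^-}$ for the swap $V$, and taking partial traces gives $\Tr_B(\Ps\rho_{AB}\Ps)=\Tr_A(\Ps\rho_{AB}\Ps)=\tfrac{1-f}{2}\id$.

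Second, the spectral inequality $\sqrt{r_1}\le\sqrt{r_2}+\sqrt{r_3}+\sqrt R$, with $R=\sum_{i\ge4}r_i$, is never established. Your lumping step is not an operation on $\ket\phi$. The tail $\sum_{i\ge4}\sqrt{r_i}\ket{a_i}\ket{b_i}$ is not a product vector, and nothing ensures that swap invariance or extendibility survive such a modification. Its justification also points the wrong way, since the form with $\sqrt R$ is the stronger one.

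The argument that works runs as follows.
\begin{itemize}
\item If $r_1\le1/2$, the bound is trivial: the squared right-hand side is at least $1-r_1\ge r_1$.
\item If $r_1>1/2$, compress $A'$ onto the $A'$-Schmidt support of the top eigenvector. This support is two-dimensional because $\sigma_A=\id/2$.
\item The compressed operator $\omega$ keeps the eigenvalue $r_1$, and by interlacing its other three eigenvalues are bounded by $r_2,r_3,r_4$. It inherits the symmetric extension and satisfies $\omega_A\preceq\id/2$, so $\Tr\omega_A^2\le\tfrac14+(w-\tfrac12)^2$ with $w=\Tr\omega$.
\item Inserting this into the homogeneous criterion of \cite{CJKLZ} gives $G_{r_1}\ge0$ at the compressed eigenvalues, where $G_\alpha(x,y,z)=1-2\alpha+(4\alpha-2)(x+y+z)+4(xy+xz+yz)+8\sqrt{\alpha xyz}$.
\item For $\alpha>1/2$, $G_\alpha$ is nondecreasing in each variable, so $G_{r_1}(r_2,r_3,R)\ge0$. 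This absorbs the entire tail with no lumping.
\item Write $a,b,c,d$ for the square roots of $r_1,r_2,r_3,R$ and use $a^2+b^2+c^2+d^2=1$. Then $G_{r_1}(r_2,r_3,R)=(a+b+c-d)(a+b-c+d)(a-b+c+d)(-a+b+c+d)$.
\item The first three factors are positive since $a>b,c,d$, so the last is nonnegative.
\end{itemize}
Without these steps, your proposal reduces the lemma to its hard part but does not prove it.
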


\begin{proof}
Project a purification $\ket\psi$ onto the physical symmetric subspace
and normalize it:
\begin{equation}\label{eq:overlap-symmetric-projection}
 \ket\chi=\frac{(\Ps\otimes\id_{A'B'})\ket\psi}{\sqrt{1-f}},
 \qquad \langle\psi|\chi\rangle=\sqrt{1-f}.
\end{equation}
The projected state remains maximally mixed on each physical qubit
and is invariant under exchange of $A$ and $B$. 
Indeed, writing $V_{AB}$ for the exchange operator, we have
$V_{AB}\Ps=\Ps$, while taking either partial trace of
$2\Ps\rho_{AB}\Ps
=\rho_{AB}+V_{AB}\rho_{AB}V_{AB}
-2f\proj{\Psi^-}$
gives $(1-f)\id$, so normalization yields the marginal
$\id/2$ on each physical qubit.
Thus its $A'A$
marginal has the symmetric extension $\Tr_{B'}\proj\chi$ on $A'AB$,
which is invariant under exchanging $A$ and $B$, with $A'$ unchanged.
After the two-dimensional compression described in
Appendix~\ref{app:overlap-halfspace}, we apply the two-qubit
symmetric-extension criterion of \cite{CJKLZ}, stated
in~\eqref{eq:overlap-compressed-criterion}; factoring the resulting
polynomial gives
\begin{equation}\label{eq:overlap-endpoint-halfspace}
 t_1\le t_2+t_3+\left(\sum_{i\ge4}t_i^2\right)^{1/2},
\end{equation}
where $t_i$ are the ordered Schmidt coefficients of $\ket\chi$
across $AA':BB'$. Appendix~\ref{app:overlap-halfspace} derives this
constraint directly from the criterion, including the compression
and algebra.

We now use the overlap in~\eqref{eq:overlap-symmetric-projection}
to transfer this constraint from $\ket\chi$ to $\ket\psi$.
Put $s_i=\sqrt{p_i}$ and group the remaining Schmidt coefficients
into one entry in each vector:
\begin{equation}\label{eq:overlap-grouped-vectors}
 S=(s_1,s_2,s_3,\sqrt Q),\qquad
 T=(t_1,t_2,t_3,\norm{t_{\ge4}}_2).
\end{equation}
Both vectors have norm one. The overlap of two pure states is at most
the scalar product of their ordered Schmidt coefficients, and
Cauchy--Schwarz bounds the sum over the remaining coefficients:
\begin{equation}\label{eq:overlap-grouped-fidelity}
 \sqrt{1-f}=|\langle\psi|\chi\rangle|
 \le\sum_i s_it_i\le S\cdot T.
\end{equation}
Define the real number
\[
 \ell=\frac{s_1-s_2-s_3-\sqrt Q}{2}.
\]
Our goal is to prove $\ell\le\sqrt f$.
This already holds if $\ell\le0$, so assume $\ell>0$.
Take the unit vector $e=(1,-1,-1,-1)/2$.
Then $e\cdot S=\ell$, while~\eqref{eq:overlap-endpoint-halfspace}
gives $e\cdot T\le0$.
Here $\ell e=(\ell/2,-\ell/2,-\ell/2,-\ell/2)$ is simply
the vector obtained by multiplying each coordinate of $e$ by
the number $\ell$. Subtracting this vector from $S$ is useful because
\[
 (S-\ell e)\cdot T=S\cdot T-\ell(e\cdot T)\ge S\cdot T.
\]
Thus Cauchy--Schwarz gives
\begin{equation}\label{eq:overlap-halfspace-geometry}
 \sqrt{1-f}\le S\cdot T
 \le(S-\ell e)\cdot T
 \le\norm{S-\ell e}_2\norm T_2
 =\sqrt{1-\ell^2}.
\end{equation}
The last equality uses
$\norm{S-\ell e}_2^2=\norm S_2^2-2\ell(e\cdot S)+\ell^2\norm e_2^2
=1-\ell^2$.
Squaring the two ends of~\eqref{eq:overlap-halfspace-geometry}
gives $\ell^2\le f$, hence $2\ell\le2\sqrt f$.
This proves~\eqref{eq:overlap-direct-tail} in both cases.
Appendix~\ref{app:overlap-transfer} verifies the projected marginals
and proves the Schmidt-overlap inequality by finite sums.

\end{proof}

\subsection{The entropy minimum}

The next lemma gives the entropy minimum under the four-probability
constraint obtained from Lemma~\ref{lem:overlap-tail}. For the Werner
state we will set $c=2\sqrt f$; the condition $c\ge1/4$ is the source
of the lower endpoint $f\ge1/64$. Grouping the tail lets this
four-entry minimum control purifications of arbitrary finite
ancillary dimensions, as shown in Proposition~\ref{prop:overlap-optimum}.

\begin{lemma}
\label{lem:overlap-entropy}
Let $1/4\le c\le1$, and set
\begin{equation}\label{eq:overlap-canonical-scalar}
 A_c=\frac{c+\sqrt{12-3c^2}}4,\qquad B_c=\frac{A_c-c}{3}.
\end{equation}
For every probability vector $q=(q_0,q_1,q_2,q_3)$ with $q_0$ a
largest entry and
\begin{equation}\label{eq:overlap-fourconstraint}
 \sqrt{q_0}-\sqrt{q_1}-\sqrt{q_2}-\sqrt{q_3}\le c,
\end{equation}
we have
\begin{equation}\label{eq:overlap-scalar}
 H(q)\ge H(A_c^2,B_c^2,B_c^2,B_c^2).
\end{equation}
The displayed spectrum attains the minimum. For $c<1$, it is the
unique minimizing spectrum with the specified leading entry, and
its entropy is strictly less than one bit.
\end{lemma}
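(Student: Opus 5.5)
Write $x_i=\sqrt{q_i}$, so the problem becomes: minimize $\Phi(x)=-\sum_i x_i^2\log_2 x_i^2$ on the unit sphere $\sum x_i^2=1$, $x_i\ge0$, with $x_0\ge x_1,x_2,x_3$ and $x_0-x_1-x_2-x_3\le c$. I first dispose of the easy region. If $q_0\le1/2$ then $H(q)\ge-\log_2 q_0\ge1$, and I will check separately that $H(A_c^2,B_c^2,B_c^2,B_c^2)<1$ for $1/4\le c<1$. That entropy equals one bit at $c=1$, where $A_1=1/\sqrt2$, $B_1=0$, and I expect it to be monotone in $c$; this is a one-variable calculus check. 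So assume $q_0>1/2$.

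Fix $q_0=x_0^2>1/2$. The tail $(x_1,x_2,x_3)$ then lies on the sphere $\sum_{i\ge1}x_i^2=1-q_0$ and must satisfy $x_1+x_2+x_3\ge x_0-c$. The tail entropy is $(1-q_0)H(\text{tail normalized})$, and Schur-concavity means it is smallest when the tail is as concentrated as possible. On the sphere of fixed radius, the $\ell_1$-norm constraint $\sum x_i\ge s$ with $s=x_0-c$ is a lower bound on spread. I would show that the minimum of entropy under fixed $\ell_2$ and $\ell_1\ge s$ is attained with the $\ell_1$ constraint active and the tail of the form $(u,v,v)$ with $u\ge v$. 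This is the standard two-distinct-values structure from Lagrange multipliers: the stationarity condition $-\log x_i^2-1=\lambda+\mu/(2x_i)$, viewed as a function of $x_i$, has at most two solutions. Among configurations $(u,v,v)$ and $(u,u,v)$, the concentrated one $(u,v,v)$ with the larger value single wins. The resulting reduced problem is two-dimensional: the variables are $x_0$ and the split of the tail.

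Next I would compare with the canonical point. At $x_0=A_c$ the constraint forces $\sum x_i=A_c-c=3B_c$ with $\sum x_i^2=3B_c^2$, so Cauchy–Schwarz forces $x_1=x_2=x_3=B_c$. This gives uniqueness at the leading entry $A_c^2$, and it also shows that $x_0\le A_c$ overall, which recovers $p_1\le a_f^2$. For $x_0<A_c$, the tail can be uneven, and we must show that the reduced function $g(x_0)=h_2(q_0)+(1-q_0)H_{\min}(\text{tail};x_0)$ is decreasing in $x_0$ on $[1/\sqrt2,A_c]$. Lowering $q_0$ costs $h_2$ increase, and the freedom to concentrate the tail must never compensate for that increase.

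The main obstacle is this last monotonicity or comparison. I would parametrize the tail as $(u,v,v)$ with $u+2v=x_0-c$ and $u^2+2v^2=1-x_0^2$, solving explicitly: $v=\bigl(2s-\sqrt{2(3r-s^2)}\bigr)/6$ with $s=x_0-c$ and $r=1-x_0^2$. Then I would differentiate the entropy along the constraint curve and show the derivative has a fixed sign. I expect the sign to fail exactly when $c<1/4$, which explains that hypothesis. Should a clean sign argument not emerge, I would fall back on a bound $H\ge h_2(q_0)+(1-q_0)\log_2$ (effective tail support), plus interval-arithmetic verification of the remaining two-variable inequality over the compact region $c\in[1/4,1]$, $x_0\in[1/\sqrt2,A_c]$.
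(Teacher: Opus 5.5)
Your reduction to $q_0>1/2$ is fine, and so is the Cauchy--Schwarz observation that $x_0\le A_c$ with equality forcing the tail $(B_c,B_c,B_c)$. The gap is that the actual content of the lemma is the claim that your reduced function $g(x_0)$ is minimized at $x_0=A_c$, and for this you give only an expectation and a computational fallback. Suppose you carry out the derivative you propose, using the envelope theorem with the tail multipliers ($h'(x_i)=2\lambda x_i+\mu$ for the amplitude entropy $h(x)=-x^2\ln x^2$). A short computation shows that $g'(x_0)=0$ at a tail $(u,v,v)$ is equivalent to
\[
 ab\ln\frac ab=ad\ln\frac ad+bd\ln\frac bd,\qquad a=x_0,\ b=u,\ d=v,
\]
which is exactly the stationarity identity in the paper. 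So your fixed-sign claim reduces to showing that this identity is incompatible with $a-b-2d=c\ge1/4$ and $a^2+b^2+2d^2=1$. That is the missing idea, and the paper supplies it. With the ratios $U=a/d$ and $T=b/d$, the identity reads $F(U)=G(T)$, where $F(x)=\exp(x\ln x/(x+1))$ is convex and $G(x)=\exp(x\ln x/(x-1))$ is concave. Tangent lines at the common value $F(8)=G(4)=2^{8/3}$ give $U\le 8+k(T-4)$ with $k=3(3-2\ln2)/(3+\ln2)$. A quadratic inequality then gives $(U-T-2)/\sqrt{U^2+T^2+2}<1/4$, while normalization together with $e\ge d$ gives $c\le (U-T-2)/\sqrt{U^2+T^2+2}$. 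Your interval-arithmetic fallback cannot replace this: $g(x_0)-g(A_c)$ vanishes identically on the edge $x_0=A_c$ of your region, so enclosures cannot certify nonnegativity near that edge without exactly the local derivative analysis that is missing. Your expectation that the sign fails exactly for $c<1/4$ is also unfounded. The value $1/4$ comes from the tangent-line estimate, and the paper explicitly does not claim it is sharp.

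The reduction to tails of the form $(u,v,v)$ is also incomplete. The paper avoids it altogether by never fixing $q_0$. It works at a global minimizer and shows that the constraint is active and that no tail vanishes (raising a zero amplitude and compensating with $a$ and a positive tail strictly lowers the entropy, since $h'(0)=0$). It then uses a single two-sided feasible curve that moves the smallest tail $d$ together with $b$ and $a$, treating the third tail only through $e\ge d$. In your fixed-$x_0$ problem, the Lagrange condition applies only when all tail amplitudes are positive. For $r<s^2<2r$ (with $s=x_0-c$, $r=1-x_0^2$), the feasible arcs end at points $(u,v,0)$ while $(u,u,v)$ is infeasible, so these endpoints need a separate argument. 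Moreover, your claim that $(u,v,v)$ beats $(u,u,v)$ when both are feasible is asserted without proof. The $\ell_1$ constraint is also active only when $x_0-c>\sqrt{1-x_0^2}$; otherwise the tail $(\sqrt{1-q_0},0,0)$ is feasible and $g=h_2(q_0)$.

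Finally, your calibration of the easy case is wrong. In fact $A_1=(1+3)/4=1$ and $B_1=0$; your values $A_1=1/\sqrt2$, $B_1=0$ are not even normalized. So the canonical entropy is $0$ at $c=1$ and decreases in $c$, being about $0.849$ bits at $c=1/4$. The fact you need, that this entropy is below one bit on $[1/4,1)$, is true, but not for your reason. The paper obtains it directly from the feasible trial vector $(5/6,1/18,1/18,1/18)$.
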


\begin{proof}
The case $c=1$ has the pure minimizer $(1,0,0,0)$. Suppose $c<1$.
Continuity and compactness ensure that a minimum exists. The fixed
vector $(5/6,1/18,1/18,1/18)$ is feasible for every $c\ge1/4$ and has
entropy below one bit. A minimizing vector must therefore have
$q_0>1/2$. If its square-root constraint were strict, transferring
probability from the tail to this largest entry would decrease
entropy while remaining feasible. Thus the constraint is active.
Writing the amplitudes as $a=\sqrt{q_0}$ and $b_i=\sqrt{q_i}$, we get
\begin{equation}\label{eq:overlap-active-amplitudes}
 a^2+b_1^2+b_2^2+b_3^2=1,\qquad a-b_1-b_2-b_3=c.
\end{equation}

These two equalities describe the tradeoff between the leading
amplitude and the tail. If a tail entry vanishes, adding a small
amplitude changes its contribution to the square-root sum to first
order, while its probability changes only to second order. Adjusting
the leading amplitude and one positive tail to preserve both
equalities then decreases entropy. Hence every minimizing tail entry
is positive.

If the tails are unequal, vary the smallest tail amplitude and
compensate with the largest tail and the leading amplitude. Entropy
must be stationary along this feasible variation. The resulting
logarithmic relation between the amplitude ratios, combined with
normalization, forces $c<1/4$. An elementary comparison of tangents
to a convex and a concave function proves this estimate in
Appendix~\ref{app:overlap-stationary-proof}. It contradicts the
assumed range of $c$, so all three tails coincide. Solving
$a^2+3b^2=1$ and $a-3b=c$ yields $a=A_c$ and $b=B_c$.
Appendix~\ref{app:overlap-entropy-proof} gives the exact feasible
variations, entropy derivatives, and the trial-vector inequalities.
\end{proof}

\subsection{The single-copy optimum}

Lemmas~\ref{lem:overlap-tail} and~\ref{lem:overlap-entropy} together
give the lower bound needed to prove that the canonical purification
is optimal. For an arbitrary purification, we combine all Schmidt
probabilities from the fourth onward into one entry. This cannot
increase entropy and reduces the comparison to four probabilities,
regardless of the ancillary dimensions.

\begin{proposition}
\label{prop:overlap-optimum}
For every $1/64\le f\le1/4$, optimizing over arbitrary finite-dimensional
purifying systems gives
\begin{equation}\label{eq:overlap-optimum}
 \EP(W(f))=H(a_f^2,b_f^2,b_f^2,b_f^2)=C(f).
\end{equation}
The canonical purification attains this value.
\end{proposition}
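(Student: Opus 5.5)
The plan is to combine the two lemmas with a grouping argument, and then to check that the canonical spectrum is exactly the extremal spectrum of Lemma~\ref{lem:overlap-entropy}.

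\textbf{Upper bound.} The canonical purification gives $\EP(W(f))\le C(f)$. This holds because its entanglement across $AA':BB'$ is $H(a_f^2,b_f^2,b_f^2,b_f^2)$, as computed in Appendix~\ref{app:bell}.

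\textbf{Lower bound, reduction to four probabilities.} Fix any finite-dimensional purification $\ket\psi$ of $W(f)$ with ordered Schmidt probabilities $p_1\ge p_2\ge\cdots$, and set $Q=\sum_{i\ge4}p_i$. The state $W(f)$ has maximally mixed marginals and singlet weight $f\le 1/4<1$. Lemma~\ref{lem:overlap-tail} therefore applies and gives $\sqrt{p_1}-\sqrt{p_2}-\sqrt{p_3}-\sqrt Q\le 2\sqrt f$. Define the grouped vector $q=(p_1,p_2,p_3,Q)$. Merging the entries $p_4,p_5,\dots$ into one cannot increase Shannon entropy, by the grouping rule $H(p)=H(q)+Q\,H(p_{\ge4}/Q)$. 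Hence $H(p)\ge H(q)$.

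\textbf{Applying the entropy lemma.} We want to invoke Lemma~\ref{lem:overlap-entropy} with $c=2\sqrt f$.
\begin{itemize}
\item The range $1/64\le f\le 1/4$ is exactly $1/4\le c\le 1$, which is the lemma's hypothesis.
\item The lemma requires $q_0=p_1$ to be a largest entry of $q$. This can fail only if $Q>p_1$, which needs many small tail entries. In that case, order $q$ decreasingly as $q'$. The constraint still holds for $q'$: the quantity $\sqrt{q'_0}-\sum_{i\ge1}\sqrt{q'_i}$ equals $\sqrt{Q}-\sqrt{p_1}-\sqrt{p_2}-\sqrt{p_3}$. Since $\sqrt{p_1}\ge\sqrt{p_2}$, this is at most $\sqrt Q-\sqrt{p_1}$, which is bounded by $\sqrt{p_1}-\sqrt{p_2}-\sqrt{p_3}-\sqrt Q$ plus a nonpositive correction.
\item More simply, in this case $H(q)\ge 1$: if $Q>p_1\ge p_2,p_3$, then every entry is at most $\max(Q,p_1)$. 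One checks that the largest entry is at most $1/2$ unless $Q>1/2$. When $Q>1/2$, the constraint on the reordered vector still holds, because $\sqrt Q-\sqrt{p_1}-\sqrt{p_2}-\sqrt{p_3}\le\sqrt Q\le 1$ is not enough. So I would instead compare directly: the reordered constraint is implied by $\sqrt{Q}-\sqrt{p_1}-\cdots\le c$.
\item I expect this bookkeeping to be the only delicate point. I would handle it by splitting into two cases. If the largest entry of $q$ is at most $1/2$, then $H(q)\ge 1>C(f)$, using that the minimum entropy in Lemma~\ref{lem:overlap-entropy} is below one bit for $c<1$, and the case $f=1/4$ is trivial. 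Otherwise, the largest entry, say $x>1/2$, satisfies the constraint after reordering.
\end{itemize}
Either way, the lemma gives $H(q)\ge H(A_c^2,B_c^2,B_c^2,B_c^2)$.

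\textbf{Identifying the minimizer with the canonical spectrum.} With $c=2\sqrt f$, the pair $(A_c,B_c)$ is the unique nonnegative solution of $a^2+3b^2=1$ and $a-3b=c$ with $a\ge b$. The relations~\eqref{eq:overlap-canonical-relations} show that $(a_f,b_f)$ solves the same system. A direct substitution confirms this: $a_f=(\sqrt f+\sqrt{3(1-f)})/2$ equals $(c+\sqrt{12-3c^2})/4$. Hence $A_c=a_f$ and $B_c=b_f$, and the bound becomes $H(p)\ge C(f)$.

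Taking the infimum over all purifications gives $\EP(W(f))\ge C(f)$. Together with the upper bound, this yields equality, attained by the canonical purification.
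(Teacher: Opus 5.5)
Your overall route matches the paper's: the canonical upper bound, Lemma~\ref{lem:overlap-tail} with $c=2\sqrt f$, grouping the tail into $Q$, Lemma~\ref{lem:overlap-entropy}, and the identification $A_c=a_f$, $B_c=b_f$. The gap is in how you handle the hypothesis that $q_0$ is a largest entry.

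In your final case split you claim that when the largest entry of $q=(p_1,p_2,p_3,Q)$ exceeds $1/2$, it ``satisfies the constraint after reordering.'' This is false when that entry is $Q$. Take the canonical purification and tensor it with a maximally entangled state of Schmidt rank $D\ge3$ shared between extra ancilla registers on Alice's and Bob's sides. This is still a purification of $W(f)$. Its Schmidt probabilities are $a_f^2/D$ (with multiplicity $D$) and $b_f^2/D$ (with multiplicity $3D$). Hence $p_1=p_2=p_3=a_f^2/D$ and $Q=1-3a_f^2/D$. Then $\sqrt Q-\sqrt{p_1}-\sqrt{p_2}-\sqrt{p_3}\to1>c$ as $D\to\infty$, and moreover $H(q)\to0$. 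So in this case no argument that lower-bounds $H(q)$ can succeed. Your first bullet has the same defect: when $Q>p_1$ one has $\sqrt Q-\sqrt{p_1}>0>\sqrt{p_1}-\sqrt{p_2}-\sqrt{p_3}-\sqrt Q$, so the claimed comparison runs the wrong way.

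The repair is to bound $H(p)$ itself in the bad case, and to split on $p_1$ rather than on the largest entry of $q$. If $p_1\le1/2$, then every $p_i\le1/2$, so $H(p)\ge1>C(f)$ for $f<1/4$. This covers every situation with $Q>1/2$, since then $p_1\le1-Q<1/2$. If $p_1>1/2$, then $Q\le1-p_1<1/2<p_1$, so $p_1$ is automatically a largest entry of $q$. No reordering is ever needed, and Lemma~\ref{lem:overlap-tail}, the grouping inequality $H(p)\ge H(q)$, and Lemma~\ref{lem:overlap-entropy} apply directly. This is exactly the case split in the paper's proof, and with it your argument is complete.
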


\begin{proof}
The canonical purification supplies $\EP(W(f))\le C(f)$.
At $f=1/4$ it is the product vector
$\ket{\Phi^+}_{AA'}\ket{\Phi^+}_{BB'}$, so its zero cost is optimal.
Assume $1/64\le f<1/4$. Set $c=2\sqrt f\in[1/4,1)$.
The explicit formulas give
$A_c=a_f$ and $B_c=b_f$, so Lemma~\ref{lem:overlap-entropy} also
shows $C(f)<1$.

Take an arbitrary purification and let $p_1\ge p_2\ge\cdots$ be its
Schmidt probabilities across $AA':BB'$. If $p_1\le1/2$, each
nonzero $p_i$ satisfies $-\log_2p_i\ge1$. Therefore
$H(p)\ge1>C(f)$.

If $p_1>1/2$, set $Q=\sum_{i\ge4}p_i$. Then $p_1$ is the largest
entry of the probability vector $q=(p_1,p_2,p_3,Q)$, and
Lemma~\ref{lem:overlap-tail} gives
$\sqrt{p_1}-\sqrt{p_2}-\sqrt{p_3}-\sqrt Q\le c$.
Grouping the tail cannot increase entropy: explicitly,
$H(p)=H(q)+QH((p_i/Q)_{i\ge4})\ge H(q)$, with the second term
defined to be zero when $Q=0$. Applying
Lemma~\ref{lem:overlap-entropy} thus gives
\begin{equation}\label{eq:overlap-completed-bound}
 H(p)\ge H(p_1,p_2,p_3,Q)
 \ge H(A_c^2,B_c^2,B_c^2,B_c^2)=C(f).
\end{equation}
Every purification therefore costs at least $C(f)$. Taking the
infimum and combining with the canonical upper bound proves the
claim, including attainment.
\end{proof}

The von Neumann lower bound follows from the joint
tail constraint in Lemma~\ref{lem:overlap-tail} and the entropy
minimum in Lemma~\ref{lem:overlap-entropy}. Their proofs, including
all auxiliary scalar estimates, are given in
Appendix~\ref{app:overlap}. 

\section{Discussion}\label{sec:discussion}

Our results determine the entanglement of purification of two-qubit
Werner states throughout $1/64\le f\le1/4$. In this interval, the
canonical purification is optimal over all purifications, without
symmetry or ancillary-dimension restrictions, and its entanglement
therefore gives both the entanglement of purification and the reflected
entropy. The lower bound follows by controlling the largest Schmidt
probability jointly with the tail and reducing the entropy comparison
to four probabilities. The minimizing spectrum then has three equal
smaller probabilities, as in the canonical purification.

The comparison with the regularized quantity shows how this exact
single-copy value relates to the cost of collective preparation.
The ensemble inequality of \cite{ChenWinter} gives a cost strictly
below the canonical value for $0<f<f_\ast$; the same auxiliary state
$W(f_\ast)$ can be used throughout this range. Combining this bound
with canonical optimality establishes nonadditivity for
$1/64\le f<f_\ast$. Thus, even where the canonical purification
minimizes the entanglement for one copy, collective preparation can
strictly reduce the cost per copy.

This collective advantage occurs entirely among separable states.
These Werner states nevertheless have correlations that are not
diagonal in a local product basis: for $f<1/4$, their Pauli correlation
matrix is a nonzero multiple of $\id_3$, whereas a classical two-qubit
state with maximally mixed marginals has a correlation matrix of rank
at most one. Our result therefore complements the R\'enyi
nonadditivity result for classical states in \cite{NegariKhanian},
while leaving the corresponding von Neumann additivity question open.

The AI-generated proof of nonadditivity in Ref.~\cite{Krohn} is incorrect. Appendix~\ref{app:error} gives an
explicit counterexample to Lemma~2 at its stated parameter
$f=1/200$. That lemma is false, invalidating the derivation of the
single-copy lower bound and the claimed proof of nonadditivity. The
error is a mismatch of weights in an application of Cauchy--Schwarz.

Several questions remain open. Our argument guarantees a finite-copy
violation of additivity without identifying the required number of
copies. It remains to determine whether two copies suffice and to
find the exact regularized value. The endpoints of the proven interval
reflect the bounds used here: $1/64$ comes from a sufficient estimate
in the entropy minimization, while $f_\ast$ marks the limit of the
ensemble comparison. Neither is claimed to be a sharp boundary for
nonadditivity, and canonical optimality may extend below $1/64$. It would also be interesting to check which parts of these proofs extend to higher Rényi entropies and whether they can be used to prove nonadditivity there.

\emph{Acknowledgments.}
Z.B.K. acknowledges support from the Ada Lovelace Postdoctoral Fellowship at Perimeter Institute for Theoretical Physics. Research at Perimeter Institute is supported in part by the Government of Canada through the Department of Innovation, Science and Economic Development and by the Province of Ontario through the Ministry of Colleges and Universities. The authors used a centaur (human–AI) workflow in the preparation of this work. The conceptual ideas and scientific contributions are those of the authors. Agentic AI tools assisted with detailed calculations, all of which were independently verified by the authors. The main proofs of canonical optimality and non-additivity presented in this paper were also formalized and verified in Lean. The formalization was carried out by AI, while the authors checked the structure of the theorems, as well as the axioms and assumptions used in the formal proofs.

\appendix
\numberwithin{equation}{section}
\section{Werner-state identities}\label{app:werner-identities}

\subsection{A separable decomposition}\label{app:separable}

The state $W(1/64)$ is separable. This can be seen without a separability
test. Let $\sigma_j$, $j\in\{x,y,z\}$, be the Pauli matrices and set
$\rho_{j,s}=(\id+s\sigma_j)/2$ for $s\in\{-1,1\}$. Expanding the products gives
\begin{equation}\label{eq:separable}
W(f)=\sum_{j\in\{x,y,z\}}\sum_{s=\pm1}
\left[
\frac{1-2f}{6}\rho_{j,s}\otimes\rho_{j,s}
+\frac f3\rho_{j,s}\otimes\rho_{j,-s}
\right],\qquad 0\le f\le\frac12.
\end{equation}
All coefficients are nonnegative and sum to one. At $f=1/64$, the six
equal-sign products have weight $31/192$ each and the six opposite-sign
products have weight $1/192$ each. Thus the example is a mixture of twelve
pure product states.

\subsection{Bell-state recoupling}\label{app:bell}

The Bell vectors are real in the computational basis, so vectorizing
$\sqrt{W(f)}$ gives the canonical purification
\begin{equation}\label{eq:bellpurif}
\ket{\psi_f}
=\sqrt f\,\ket{\Psi^-}_{AB}\ket{\Psi^-}_{A'B'}
+\sqrt{\frac{1-f}{3}}
\sum_{\beta\in\{\Phi^+,\Phi^-,\Psi^+\}}
\ket\beta_{AB}\ket\beta_{A'B'}.
\end{equation}
The expression above is a Schmidt decomposition across
$AB:A'B'$.
Its Schmidt amplitudes across $AA':BB'$ are $(a_f,b_f,b_f,b_f)$, with
\begin{equation}\label{eq:canonical-amplitudes}
a_f=\frac{\sqrt f+\sqrt{3(1-f)}}2,
\qquad
b_f=\frac{\sqrt{(1-f)/3}-\sqrt f}{2}.
\end{equation}
For $0\le f\le1/4$, these amplitudes satisfy
\begin{equation}\label{eq:canonical-relations}
a_f^2+3b_f^2=1,\qquad a_f-3b_f=2\sqrt f,\qquad a_f\ge b_f\ge0.
\end{equation}

For the purification~\eqref{eq:bellpurif}, recall that
$a=\sqrt f$, $b=\sqrt{(1-f)/3}$, and
\begin{equation*}
\ket{\Phi^\pm}=\frac{\ket{00}\pm\ket{11}}{\sqrt2},\qquad
\ket{\Psi^+}=\frac{\ket{01}+\ket{10}}{\sqrt2}.
\end{equation*}
Expanding the Bell states in the computational basis and regrouping
as $AA':BB'$ gives
\begin{align}\label{eq:recouple}
\ket{\psi_f}
={}&\frac{a+3b}{2}\ket{\Phi^+}_{AA'}\ket{\Phi^+}_{BB'}
+\frac{b-a}{2}\ket{\Phi^-}_{AA'}\ket{\Phi^-}_{BB'}\notag\\
&+\frac{b-a}{2}\ket{\Psi^+}_{AA'}\ket{\Psi^+}_{BB'}
+\frac{a-b}{2}\ket{\Psi^-}_{AA'}\ket{\Psi^-}_{BB'}.
\end{align}
For $0\le f\le1/4$, the absolute values of the coefficients are
$(a_f,b_f,b_f,b_f)$ from~\eqref{eq:canonical-amplitudes}. The sign of
the last Bell vector can be absorbed locally, so this is a Schmidt
decomposition. At $t_0=4/247$, the squared amplitudes are
\begin{equation}\label{eq:rvalue}
q(t_0)=\frac{841}{988},\qquad b_{t_0}^2=\frac{49}{988}.
\end{equation}
At $f=1/64$, they give
\begin{equation}\label{eq:point-value}
C(1/64)=H\!\left(\frac{95+3\sqrt{21}}{128},
\frac{11-\sqrt{21}}{128},\frac{11-\sqrt{21}}{128},
\frac{11-\sqrt{21}}{128}\right)\approx0.849253.
\end{equation}
These identities verify
the canonical entropies used in Propositions~\ref{prop:upper}
and~\ref{prop:overlap-optimum}.

\subsection{The symmetric endpoint}\label{app:wzero}

For example, the separable state $W(0)$ has $\EP(W(0))=1$.
Indeed, every purification is invariant under exchange of the physical
qubits. Thus $S(AA')=S(BA')$, and weak monotonicity gives
\begin{equation}\label{eq:w0}
2S(AA')=S(AA')+S(BA')\ge S(A)+S(B)=2.
\end{equation}
The reverse bound follows by assigning the full purifying system to
Bob, which gives $\EP\le S(A)=1$. This is the symmetric-support
argument of  \cite{ChristandlWinter}.

\section{Regularization bound}\label{app:arithmetic}

This appendix supplies the analytical proof of
Proposition~\ref{prop:upper} and an exact positive regularization gap.
All strict comparisons follow from elementary function bounds or
displayed integer inequalities. No numerical search or interval
subdivision is required.

\subsection{The derivative test and an exact gap}\label{app:upper-proof}

We compute the derivative used in Section~\ref{sec:upper}, determine
where it is negative. For the calculation, use the parameter $t=f/(1-p)$ and
write $U_f(t)=V_f(1-f/t)$, with $f\le t\le1/4$.

\paragraph{The initial derivative.}
The state entropy satisfies
\[
 S(W(t))-t\frac{d}{dt}S(W(t))=\log_2\frac{3}{1-t}.
\]
Using this identity to differentiate~\eqref{eq:variable-endpoint-cost}
gives
\begin{align}\label{eq:variable-endpoint-derivative}
 \frac{\partial U_f(t)}{\partial t}&=-\frac f{t^2}J(t),\notag\\
 J(t)&:=C(t)-tC'(t)+\log_2(1-t)-1.
\end{align}
To evaluate $J(t)$, write $z=\sqrt{3t/(1-t)}$. The canonical
probability and its derivative obey
\begin{equation*}
 q(t)=\frac{(3+z)^2}{4(3+z^2)},\qquad
 q(t)-tq'(t)=\frac{3+z}{4}.
\end{equation*}
Substituting $C(t)=\hbin(q(t))+(1-q(t))\log_2 3$ gives
\begin{align}\label{eq:J-binary}
 J(t)&=\log_2 6-\frac{1-z}{2}\log_2(1-z)
                  -\frac{3+z}{2}\log_2(3+z)\notag\\
     &=2\hbin\!\left(\frac{1-\sqrt{3t/(1-t)}}4\right)
          -\log_2\frac83.
\end{align}
Since $t=f/(1-p)$, the chain rule gives
$V_f'(0^+)=fU_f'(f)=-J(f)$,
which is~\eqref{eq:initial-endpoint-derivative}.

\paragraph{The sign and the threshold.}
As $t$ increases from $0$ to $1/4$, the binary-entropy argument in
\eqref{eq:J-binary} decreases strictly from $1/4$ to $0$.
Since $\hbin$ is strictly increasing on $[0,1/2]$, $J$ is strictly
decreasing. Its continuous endpoint values are
$J(0)=\log_2(2/\sqrt3)>0$ and $J(1/4)=-\log_2(8/3)<0$.
Thus it has exactly one zero, defining $f_\ast$ by
\begin{equation}\label{eq:strict-bound-threshold}
 2\hbin\!\left(\frac{1-\sqrt{3f_\ast/(1-f_\ast)}}4\right)
 =\log_2\frac83,\qquad f_\ast\approx0.017086.
\end{equation}
Consequently, $V_f'(0^+)<0$ precisely for $0<f<f_\ast$.
Figure~\ref{fig:initial-slope} illustrates this sign change.
For every such $f$, \eqref{eq:variable-endpoint-derivative} also shows
that $U_f(t)$ decreases up to $t=f_\ast$ and increases thereafter.
Thus $t=f_\ast$ minimizes this preparation bound for every
$0<f<f_\ast$.

\begin{figure}[tb]
\centering
\begin{tikzpicture}
\begin{axis}[
 width=0.82\linewidth,height=5.2cm,
 xmin=0,xmax=0.03,ymin=-0.225,ymax=0.10,
 xlabel={$f$},ylabel={$V_f'(0^+)$},
 xtick={0,0.01,0.02,0.03},ytick={-0.2,-0.1,0,0.1},
 scaled ticks=false,tick label style={font=\small},
 xticklabel style={/pgf/number format/fixed,/pgf/number format/precision=2},
 axis line style={black!65},tick style={black!65},
 grid=major,grid style={black!10},
]
\path[fill=black!8] (axis cs:0.015625,-0.225)
 rectangle (axis cs:0.01708595559070344,0.10);
\addplot[black!65,thin] coordinates {(0,0) (0.03,0)};
\addplot[blue!65!black,thick] coordinates {
(0,-0.207518749639)
(9.25925925926e-07,-0.206196610619)
(3.7037037037e-06,-0.204871793279)
(8.33333333333e-06,-0.203544290921)
(1.48148148148e-05,-0.202214096803)
(2.31481481481e-05,-0.200881204143)
(3.33333333333e-05,-0.199545606116)
(4.53703703704e-05,-0.198207295854)
(5.92592592593e-05,-0.196866266448)
(7.5e-05,-0.195522510946)
(9.25925925926e-05,-0.194176022354)
(0.000112037037037,-0.192826793632)
(0.000133333333333,-0.191474817701)
(0.000156481481481,-0.190120087435)
(0.000181481481481,-0.188762595667)
(0.000208333333333,-0.187402335185)
(0.000237037037037,-0.186039298732)
(0.000267592592593,-0.184673479008)
(0.0003,-0.183304868669)
(0.000334259259259,-0.181933460324)
(0.00037037037037,-0.180559246541)
(0.000408333333333,-0.179182219838)
(0.000448148148148,-0.177802372691)
(0.000489814814815,-0.17641969753)
(0.000533333333333,-0.175034186738)
(0.000578703703704,-0.173645832652)
(0.000625925925926,-0.172254627564)
(0.000675,-0.170860563718)
(0.000725925925926,-0.169463633313)
(0.000778703703704,-0.168063828498)
(0.000833333333333,-0.166661141378)
(0.000889814814815,-0.165255564008)
(0.000948148148148,-0.163847088397)
(0.00100833333333,-0.162435706505)
(0.00107037037037,-0.161021410243)
(0.00113425925926,-0.159604191475)
(0.0012,-0.158184042016)
(0.00126759259259,-0.156760953631)
(0.00133703703704,-0.155334918037)
(0.00140833333333,-0.153905926899)
(0.00148148148148,-0.152473971834)
(0.00155648148148,-0.15103904441)
(0.00163333333333,-0.149601136142)
(0.00171203703704,-0.148160238495)
(0.00179259259259,-0.146716342885)
(0.001875,-0.145269440674)
(0.00195925925926,-0.143819523174)
(0.00204537037037,-0.142366581646)
(0.00213333333333,-0.140910607296)
(0.00222314814815,-0.139451591281)
(0.00231481481481,-0.137989524703)
(0.00240833333333,-0.136524398612)
(0.0025037037037,-0.135056204004)
(0.00260092592593,-0.133584931822)
(0.0027,-0.132110572955)
(0.00280092592593,-0.130633118237)
(0.0029037037037,-0.12915255845)
(0.00300833333333,-0.127668884318)
(0.00311481481481,-0.126182086511)
(0.00322314814815,-0.124692155644)
(0.00333333333333,-0.123199082276)
(0.00344537037037,-0.12170285691)
(0.00355925925926,-0.120203469992)
(0.003675,-0.118700911912)
(0.00379259259259,-0.117195173001)
(0.00391203703704,-0.115686243535)
(0.00403333333333,-0.114174113731)
(0.00415648148148,-0.112658773748)
(0.00428148148148,-0.111140213686)
(0.00440833333333,-0.109618423587)
(0.00453703703704,-0.108093393433)
(0.00466759259259,-0.106565113147)
(0.0048,-0.105033572591)
(0.00493425925926,-0.103498761568)
(0.00507037037037,-0.101960669819)
(0.00520833333333,-0.100419287027)
(0.00534814814815,-0.0988746028089)
(0.00548981481481,-0.0973266067233)
(0.00563333333333,-0.0957752882654)
(0.0057787037037,-0.0942206368677)
(0.00592592592593,-0.0926626419001)
(0.006075,-0.0911012926689)
(0.00622592592593,-0.0895365784167)
(0.0063787037037,-0.087968488322)
(0.00653333333333,-0.0863970114985)
(0.00668981481481,-0.0848221369952)
(0.00684814814815,-0.0832438537954)
(0.00700833333333,-0.0816621508169)
(0.00717037037037,-0.080077016911)
(0.00733425925926,-0.0784884408624)
(0.0075,-0.0768964113887)
(0.00766759259259,-0.07530091714)
(0.00783703703704,-0.0737019466984)
(0.00800833333333,-0.0720994885777)
(0.00818148148148,-0.0704935312225)
(0.00835648148148,-0.0688840630085)
(0.00853333333333,-0.0672710722413)
(0.00871203703704,-0.0656545471564)
(0.00889259259259,-0.0640344759187)
(0.009075,-0.0624108466218)
(0.00925925925926,-0.0607836472876)
(0.00944537037037,-0.059152865866)
(0.00963333333333,-0.0575184902344)
(0.00982314814815,-0.0558805081967)
(0.0100148148148,-0.0542389074838)
(0.0102083333333,-0.0525936757519)
(0.0104037037037,-0.0509448005832)
(0.0106009259259,-0.0492922694843)
(0.0108,-0.0476360698865)
(0.0110009259259,-0.045976189145)
(0.0112037037037,-0.0443126145381)
(0.0114083333333,-0.0426453332672)
(0.0116148148148,-0.0409743324559)
(0.0118231481481,-0.0392995991494)
(0.0120333333333,-0.0376211203143)
(0.0122453703704,-0.0359388828379)
(0.0124592592593,-0.0342528735275)
(0.012675,-0.0325630791099)
(0.0128925925926,-0.030869486231)
(0.013112037037,-0.0291720814551)
(0.0133333333333,-0.0274708512643)
(0.0135564814815,-0.025765782058)
(0.0137814814815,-0.0240568601522)
(0.0140083333333,-0.0223440717789)
(0.014237037037,-0.0206274030859)
(0.0144675925926,-0.0189068401355)
(0.0147,-0.0171823689044)
(0.0149342592593,-0.0154539752829)
(0.0151703703704,-0.0137216450744)
(0.0154083333333,-0.0119853639946)
(0.0156481481481,-0.010245117671)
(0.0158898148148,-0.00850089164205)
(0.0161333333333,-0.0067526713568)
(0.0163787037037,-0.00500044217404)
(0.0166259259259,-0.00324418936165)
(0.016875,-0.001483898096)
(0.0171259259259,0.000280446538789)
(0.0173787037037,0.00204885955149)
(0.0176333333333,0.0038213560445)
(0.0178898148148,0.0055979512145)
(0.0181481481481,0.00737866035318)
(0.0184083333333,0.00916349884798)
(0.0186703703704,0.0109524821827)
(0.0189342592593,0.0127456259385)
(0.0192,0.0145429457943)
(0.0194675925926,0.0163444575275)
(0.019737037037,0.0181501770154)
(0.0200083333333,0.019960120235)
(0.0202814814815,0.0217743032644)
(0.0205564814815,0.0235927422835)
(0.0208333333333,0.0254154535746)
(0.021112037037,0.0272424535232)
(0.0213925925926,0.029073758619)
(0.021675,0.0309093854567)
(0.0219592592593,0.0327493507364)
(0.0222453703704,0.0345936712651)
(0.0225333333333,0.036442363957)
(0.0228231481481,0.0382954458344)
(0.0231148148148,0.0401529340289)
(0.0234083333333,0.0420148457819)
(0.0237037037037,0.0438811984458)
(0.0240009259259,0.0457520094843)
(0.0243,0.0476272964742)
(0.0246009259259,0.0495070771053)
(0.0249037037037,0.0513913691822)
(0.0252083333333,0.0532801906245)
(0.0255148148148,0.0551735594682)
(0.0258231481481,0.0570714938665)
(0.0261333333333,0.0589740120908)
(0.0264453703704,0.0608811325314)
(0.0267592592593,0.0627928736988)
(0.027075,0.0647092542246)
(0.0273925925926,0.0666302928624)
(0.027712037037,0.0685560084888)
(0.0280333333333,0.0704864201047)
(0.0283564814815,0.0724215468358)
(0.0286814814815,0.0743614079341)
(0.0290083333333,0.0763060227789)
(0.029337037037,0.0782554108777)
(0.0296675925926,0.0802095918671)
(0.03,0.0821685855144)
};
\addplot[black!65,dashed] coordinates
 {(0.01708595559070344,-0.225) (0.01708595559070344,0.10)};
\addplot[only marks,mark=*,mark size=1.6pt,blue!65!black]
 coordinates {(0.01708595559070344,0)};
\node[anchor=south west,font=\small] at (axis cs:0.0176,0.026) {$f_\ast$};
\end{axis}
\end{tikzpicture}
\caption{The initial derivative~\eqref{eq:initial-endpoint-derivative}
is negative for $0<f<f_\ast$, so a small positive mixing probability
lowers the achievable rate below $C(f)$. The value at $f=0$ is the
continuous limit. In the shaded interval $[1/64,f_\ast)$, canonical
purification is also optimal for one copy, establishing nonadditivity.}
\label{fig:initial-slope}
\end{figure}

\section{Details of the overlap approach}\label{app:overlap}

\subsection{The canonical purification maximizes the largest eigenvalue}
\label{app:overlap-leading}

This result connects the Werner case with the largest-eigenvalue
intuition discussed at the beginning of Section~\ref{sec:overlap}.

For states $\rho,\sigma$ on the same system, the root fidelity is
\[
 \mathcal F(\rho,\sigma)
 =\Tr\sqrt{\sqrt\rho\,\sigma\sqrt\rho}
 =\norm{\sqrt\rho\sqrt\sigma}_1,
\]
where $\norm X_1=\Tr\sqrt{X^\dagger X}$ is the trace norm.
If $\ket\psi_{AA'BB'}$ purifies $\rho_{AB}$ and has largest Schmidt
probability $p_1$ across $AA':BB'$, then $\sqrt{p_1}$ is its maximum
absolute overlap with a normalized product vector
$\ket\alpha_{AA'}\ket\beta_{BB'}$.
This vector purifies $\tau_A\otimes\omega_B$, where
$\tau_A=\Tr_{A'}\proj\alpha$ and $\omega_B=\Tr_{B'}\proj\beta$.
The purification characterization of fidelity\cite{Uhlmann}
therefore gives
\begin{equation}\label{eq:overlap-product-reduction}
 \sqrt{p_1}\le
 \max_{\tau_A,\omega_B}\mathcal F(\rho_{AB},\tau_A\otimes\omega_B),
\end{equation}
where the maximum is over density operators on $A$ and $B$. This
bound holds for every finite choice of ancillary dimensions.
\par

\begin{appendixlemma}\label{lem:overlap-leading}
Let $0\le f\le1/4$, and let $\ket\psi_{AA'BB'}$ be any purification
of $W(f)$, with arbitrary finite ancillary dimensions. Let $p_1(\psi)$
be the largest eigenvalue of the reduced state $\Tr_{BB'}\proj\psi$.
Then
\begin{equation}\label{eq:overlap-leading-bound}
 p_1(\psi)\le a_f^2.
\end{equation}
The canonical purification attains equality. Thus it has the largest
possible leading Schmidt probability across $AA':BB'$ among all
purifications of $W(f)$.
\end{appendixlemma}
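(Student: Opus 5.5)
Starting from the reduction \eqref{eq:overlap-product-reduction}, the task is to bound $\max_{\tau,\omega}\mathcal F(W(f),\tau_A\otimes\omega_B)$ by $a_f$. First I will show that the maximum is attained at $\tau=\omega=\id/2$. Root fidelity is jointly concave, so $\sigma\mapsto\mathcal F(W(f),\sigma)$ is concave. $W(f)$ is invariant under $U\otimes U$ for every unitary $U$. Given any pair $(\tau,\omega)$, I would therefore twirl $\tau\otimes\omega\mapsto\int (U\otimes U)(\tau\otimes\omega)(U\otimes U)^\dagger dU$. The resulting state is $U\otimes U$ invariant, but it is no longer a product, so the feasible set must be enlarged. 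The clean route is to maximize $\mathcal F(W(f),\sigma)$ over all separable $\sigma$ (a convex set containing products and closed under twirling) and then note that the maximizer can be taken to be a separable Werner state $W(g)$ with $g\le1/2$. It is tighter to twirl first over independent local unitaries $U\otimes V$ restricted to the product structure. This does not preserve $W(f)$, so I will use the $U\otimes U$ twirl on the separable relaxation instead.

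With both states Werner, they commute, and the fidelity reduces to
\[
\mathcal F(W(f),W(g))=\sqrt{fg}+\sqrt{(1-f)(1-g)}.
\]
I must then maximize over separable $g\in[0,1/2]$. The unconstrained maximum is at $g=f$ with value $1$, which is far too large. The separable relaxation is therefore too loose, since $W(f)$ is itself separable, and I must stay within products.

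So instead I will do the product optimization directly. Invariance of $W(f)$ under $U\otimes U$ lets me rotate $\tau$ to be diagonal, $\tau=\mathrm{diag}(x,1-x)$. Replacing $(\tau,\omega)$ by the average over the $Z$-rotations $U_\theta\otimes U_\theta$ and the swap/flip symmetries keeps a product only if $\omega$ is also diagonal. For this reason I will instead argue via concavity in each factor separately. For fixed $\tau$, $\omega\mapsto\mathcal F(W,\tau\otimes\omega)$ is concave, and the stabilizer of $\tau$ (rotations about its Bloch axis) leaves $W$ and $\tau$ fixed. Averaging $\omega$ over that stabilizer therefore makes $\omega$ diagonal in the same basis without decreasing the fidelity. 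Both operators are then diagonal, $\tau=\mathrm{diag}(x,1-x)$ and $\omega=\mathrm{diag}(y,1-y)$, and $\tau\otimes\omega$ is block diagonal in the span of $\{\ket{00}\}$, $\{\ket{11}\}$, and $\{\ket{01},\ket{10}\}$. $W(f)$ is also block diagonal there, with weight $(1-f)/3$ on $\ket{00}$ and on $\ket{11}$, and on the middle block the matrix $\frac{1-f}{3}\proj{\Psi^+}+f\proj{\Psi^-}$. The fidelity becomes
\[
\sqrt{\tfrac{1-f}{3}xy}+\sqrt{\tfrac{1-f}{3}(1-x)(1-y)}+\big\|\sqrt{M_f}\,\mathrm{diag}(\sqrt{x(1-y)},\sqrt{(1-x)y})\big\|_1,
\]
where $M_f$ is the middle $2\times2$ block. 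At $x=y=1/2$ this evaluates to $\sqrt{(1-f)/3}+\tfrac12(\sqrt{(1-f)/3}+\sqrt f)=a_f$, using $\sqrt{M_f}$ with eigenvalues $\sqrt{(1-f)/3}$ and $\sqrt f$.

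The remaining step, which I expect to be the main obstacle, is to show that this two-variable function is maximized at $x=y=1/2$. I would try joint concavity in $(x,y)$ together with the symmetry $(x,y)\mapsto(1-x,1-y)$, combined with the swap $x\leftrightarrow y$, which holds because $W$ is swap invariant. The fidelity is jointly concave in $(\tau,\omega)$ only in the tensor sense, so I will verify concavity of the explicit expression directly. The first two terms are geometric means and hence concave. For the trace-norm term, I would write it as $\sup_U|\Tr(\ldots U)|$, a supremum of concave square-root expressions, and bound it by Cauchy--Schwarz instead. If that fails, a direct computation of the $2\times2$ trace norm via $\sqrt{\Tr X^\dagger X+2|\det X|}$ gives an explicit function to be checked by elementary calculus. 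Finally, equality for the canonical purification follows from the explicit Schmidt decomposition \eqref{eq:recouple}, whose largest coefficient is $a_f$.
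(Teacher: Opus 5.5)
\textbf{Gap: the final two-variable maximization is never proved.} Your reduction is sound and matches the paper's. This covers the Uhlmann bound, diagonalizing $\tau$ by $U\otimes U$ invariance, dephasing $\omega$ by averaging over the stabilizer of $\tau$ with concavity of $\mathcal F$ in its second argument, the block decomposition, and the value $a_f$ at $x=y=1/2$. You were right to discard the separable-twirl detour. The problem is the step you call the main obstacle, which carries the whole quantitative content of the lemma: you never show that your explicit function of $(x,y)$ is at most $a_f$ on $[0,1]^2$.

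The route you sketch cannot work as stated. Put $u=\sqrt{(1-f)/3}$ and $v=\sqrt f$. The trace-norm term is $g(x,y)=\sqrt{\tfrac{u^2+v^2}{2}(x+y-2xy)+2uv\sqrt{xy(1-x)(1-y)}}$. On the antidiagonal, $g(\tfrac12+s,\tfrac12-s)=\sqrt{(u+v)^2/4+(u-v)^2s^2}$, which is strictly convex in $s$ whenever $f<1/4$. So this term is not concave, and no argument can prove that it is. The representation $\sup_U|\Tr(\cdots U)|$ would not help anyway, since a supremum of concave functions need not be concave.

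Joint concavity of the full sum might still hold. If it does, it is only because the geometric-mean terms dominate: along that line they equal $2u\sqrt{1/4-s^2}$. Establishing this would need a Hessian analysis over the whole square, which you have not supplied. Your fallback, ``checked by elementary calculus'', is exactly the missing proof.

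The paper closes this step without any concavity. It substitutes $x=\cos^2\theta$, $y=\cos^2\phi$, so the two outer blocks combine to $u\cos(\theta-\phi)$. For the middle block it uses the identity $(\Tr\sqrt K)^2=\Tr K+2\sqrt{\det K}$. That block becomes $\tfrac12\sqrt{(u+v)^2Y^2+(u-v)^2(1-X^2)}$ with $X=\cos(\theta-\phi)$ and $Y=\sin(\theta+\phi)$. Three elementary bounds then finish it: $Y\le 1$, $(u-v)^2\le(u+v)^2$, and $\sqrt{2-X^2}\le 2-X$. Together they give $\mathcal F\le (u+v)+\tfrac{u-v}{2}X\le\tfrac{3u+v}{2}=a_f$. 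Supplying this argument, or an equally explicit bound, is what your proposal needs to become a proof.
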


\begin{proof}
For normalized product vectors
$\ket\alpha_{AA'}\ket\beta_{BB'}$, Uhlmann's theorem\cite{Uhlmann}
gives
\begin{equation}\label{eq:overlap-uhlmann}
 \bigl|\bra\alpha\bra\beta\psi\rangle\bigr|
 \le\mathcal F\bigl(W(f),\tau_A\otimes\omega_B\bigr),
 \quad
 \tau_A=\Tr_{A'}\proj\alpha,\quad
 \omega_B=\Tr_{B'}\proj\beta.
\end{equation}
Maximizing over the product vectors recovers
\eqref{eq:overlap-product-reduction} for $\rho_{AB}=W(f)$.
We will show that the maximum on the right-hand side is
\begin{equation}\label{eq:overlap-product-maximum}
 \max_{\tau_A,\omega_B}\mathcal F\bigl(W(f),\tau_A\otimes\omega_B\bigr)
 =\frac{\Tr\sqrt{W(f)}}2=a_f.
\end{equation}
This is the auxiliary fidelity calculation used to bound $p_1$.

\paragraph{Reduction to commuting qubit states.}
Unitary invariance of root fidelity and
$(U\otimes U)W(f)(U\otimes U)^\dagger=W(f)$ allow us to diagonalize
$\tau$. In that basis put $Z=\operatorname{diag}(1,-1)$ and
$\mathcal D(\omega)=(\omega+Z\omega Z)/2$.
Since $Z\tau Z=\tau$, simultaneous conjugation by $Z\otimes Z$
leaves $W(f)$ fixed and gives
\begin{equation*}
 \mathcal F(W(f),\tau\otimes Z\omega Z)
 =\mathcal F(W(f),\tau\otimes\omega).
\end{equation*}
Concavity of root fidelity in its second argument then implies
\begin{equation}\label{eq:overlap-dephasing}
 \mathcal F(W(f),\tau\otimes\mathcal D(\omega))
 \ge\mathcal F(W(f),\tau\otimes\omega).
\end{equation}
For completeness, this concavity follows from the purification
characterization: fix a purification of the first state, choose
purifications of the two second states with nonnegative optimal
overlaps, and attach orthogonal flags with amplitudes
$\sqrt{\lambda}$ and $\sqrt{1-\lambda}$. Their coherent flagged sum
purifies the mixture, and its overlap with the fixed purification
tensored with the same flag superposition is the corresponding
weighted average. Uhlmann's theorem bounds that average by the root
fidelity with the mixture.

\paragraph{The two-dimensional block calculation.}
Write
\begin{equation}\label{eq:overlap-angle-parameters}
 \tau=\operatorname{diag}(\cos^2\theta,\sin^2\theta),\qquad
 \omega=\operatorname{diag}(\cos^2\phi,\sin^2\phi),
 \qquad 0\le\theta,\phi\le\pi/2,
\end{equation}
and put $u=\sqrt{(1-f)/3}$ and $v=\sqrt f$, so $u\ge v\ge0$.
In the ordered computational basis, the Werner state is
\begin{equation}\label{eq:overlap-werner-block}
 W(f)=
 \begin{pmatrix}
 u^2&0&0&0\\
 0&(u^2+v^2)/2&(u^2-v^2)/2&0\\
 0&(u^2-v^2)/2&(u^2+v^2)/2&0\\
 0&0&0&u^2
 \end{pmatrix}.
\end{equation}
The $00$ and $11$ blocks contribute
$u(\cos\theta\cos\phi+\sin\theta\sin\phi)$ to root fidelity.
For the middle block of
$\sqrt{\tau\otimes\omega}\,\allowbreak W(f)\allowbreak\sqrt{\tau\otimes\omega}$,
set $m=\cos\theta\sin\phi$ and $n=\sin\theta\cos\phi$.
This block has trace $(u^2+v^2)(m^2+n^2)/2$ and determinant
$u^2v^2m^2n^2$. For every positive semidefinite $2\times2$ matrix $K$,
\begin{equation}\label{eq:overlap-two-eigenvalues}
 (\Tr\sqrt K)^2=\Tr K+2\sqrt{\det K},
\end{equation}
as follows by writing its two eigenvalues. Hence, with
$X=\cos(\theta-\phi)$ and $Y=\sin(\theta+\phi)$,
\begin{equation}\label{eq:overlap-explicit-fidelity}
 \mathcal F(W(f),\tau\otimes\omega)
 =uX+\frac12\sqrt{(u+v)^2Y^2+(u-v)^2(1-X^2)}.
\end{equation}
Here $0\le X,Y\le1$. Only elementary inequalities are now needed:
\begin{align}\label{eq:overlap-elementary-maximum}
 \sqrt{(u+v)^2Y^2+(u-v)^2(1-X^2)}
 &\le(u+v)\sqrt{2-X^2}\notag\\
 &\le(u+v)(2-X).
\end{align}
The first line uses $Y^2\le1$ and $(u-v)^2\le(u+v)^2$.
The second follows from
$(2-X)^2-(2-X^2)=2(1-X)^2\ge0$. Consequently,
\begin{equation}\label{eq:overlap-final-fidelity-bound}
 \mathcal F(W(f),\tau\otimes\omega)
 \le(u+v)+\frac{u-v}{2}X
 \le\frac{3u+v}{2}=a_f.
\end{equation}
Equality is attained at $\theta=\phi=\pi/4$, corresponding to
$\tau=\omega=\id/2$.

Since $\Tr\sqrt{W(f)}=3u+v$, this proves
\eqref{eq:overlap-product-maximum}. Combining the fidelity bound
with~\eqref{eq:overlap-product-reduction} gives $p_1\le a_f^2$.
The canonical spectrum has largest entry $a_f^2$, so equality is
attained, proving the lemma.
\end{proof}

\subsection{Why the largest eigenvalue does not determine the entropy}
\label{app:overlap-majorization}

For the classical states in  \cite{NegariKhanian}, the optimal
spectrum has at most two nonzero entries. Bounding its leading entry
therefore proves full majorization and the entropy bound. A Werner
canonical spectrum instead has four nonzero entries when $f<1/4$.
Indeed, assigning the entire purifying register to $B'$ gives a valid
purification with spectrum $(1/2,1/2)$ across $A:BB'$, because
$W(f)_A=\id/2$. Its first two probabilities sum to one, whereas
\begin{equation}\label{eq:overlap-majorization-obstruction}
 a_f^2+b_f^2=1-2b_f^2<1\qquad(f<1/4).
\end{equation}
Thus a universal majorization argument cannot complete the Werner proof.

The required comparison can be expressed exactly in terms of the
entropy of the tail. Fix $1/64\le f<1/4$ and a competing purification
with $x=p_1>1/2$. Lemma~\ref{lem:overlap-leading} gives $x\le a_f^2$.
Normalize its remaining probabilities by
$r_i=p_{i+1}/(1-x)$. Then
\begin{align}\label{eq:overlap-entropy-balance}
 H(p)-C(f)
 &=\hbin(x)-\hbin(a_f^2)\notag\\
 &\quad +(1-x)H(r)-(1-a_f^2)\log_2 3.
\end{align}
The first difference is nonnegative, because binary entropy decreases
on $[1/2,1]$. A smaller leading probability therefore increases the
binary part of the entropy, but may allow a more concentrated tail.
To prove optimality, we must show that this concentration cannot save
more entropy than the binary part gains. Equivalently, the remaining
comparison is
\begin{equation}\label{eq:overlap-tail-target}
 (1-x)H(r)\ge C(f)-\hbin(x).
\end{equation}
When $\hbin(x)\ge C(f)$, nonnegativity of $H(r)$ already suffices.
The joint constraint in Lemma~\ref{lem:overlap-tail} supplies the
additional information needed for this comparison. Its combination
with the entropy minimum in Lemma~\ref{lem:overlap-entropy} proves
Proposition~\ref{prop:overlap-optimum}, without requiring the
largest-eigenvalue bound or the entropy-balance decomposition above.

\subsection{A spectral half-space from the two-qubit extension criterion}
\label{app:overlap-halfspace}

We derive the endpoint constraint used in
Lemma~\ref{lem:overlap-tail} directly from  \cite{CJKLZ}.
Let $\sigma_{XY}$ have a symmetric extension in the qubit $Y$, with
$\sigma_Y=\id/2$, and write its ordered eigenvalues as
$\lambda_1\ge\lambda_2\ge\cdots$. Put
$R=\sum_{j\ge4}\lambda_j$. We will prove
\begin{equation}\label{eq:overlap-endpoint-spectrum}
 \sqrt{\lambda_1}\le\sqrt{\lambda_2}+\sqrt{\lambda_3}+\sqrt R.
\end{equation}
If $\lambda_1\le1/2$, the square of the right-hand side is at
least $1-\lambda_1\ge\lambda_1$. Assume therefore that
$\alpha:=\lambda_1>1/2$.

\paragraph{Compression to two qubits.}
Choose a unit eigenvector $\ket v$ for $\alpha$, and let $P$ project
onto its Schmidt support on $X$ across $X:Y$. This support has
dimension at most two. It cannot have dimension one: then
$\sigma_Y\succeq\alpha\proj y$ for a unit qubit vector $\ket y$,
contradicting $\sigma_Y=\id/2$. Thus
\begin{equation}\label{eq:overlap-compressed-state}
 \omega=(P\otimes\id_Y)\sigma(P\otimes\id_Y)
\end{equation}
is a subnormalized two-qubit state. Compressing a symmetric extension
on $X$ preserves that extension, and the eigenvalues of $\omega$ are
$\alpha,r_2,r_3,r_4$, with
\begin{equation}\label{eq:overlap-compressed-eigenvalues}
 0\le r_j\le\lambda_j\quad(j=2,3,4).
\end{equation}
These inequalities follow by restricting the subspaces in the
variational characterization of eigenvalues. Also,
$0\preceq\omega_Y\preceq\id/2$: the difference from $\sigma_Y$
is the partial trace of the positive complementary diagonal block.

\paragraph{The criterion and the scalar inequality.}
The necessary direction of the two-qubit extension criterion gives
\begin{equation}\label{eq:overlap-compressed-criterion}
 \Tr\omega^2-4\sqrt{\det\omega}\le\Tr\omega_Y^2.
\end{equation}
It applies to $\omega$ by homogeneity. Set $w=\Tr\omega$.
If $\mu,\nu\le1/2$ are the eigenvalues of $\omega_Y$, then
\begin{equation}\label{eq:overlap-compressed-purity}
 \Tr\omega_Y^2\le\frac14+\left(w-\frac12\right)^2,
\end{equation}
because the right-hand side minus $\mu^2+\nu^2$ equals
$2(1/2-\mu)(1/2-\nu)\ge0$.
Combining these two inequalities and expanding gives
$G_\alpha(r_2,r_3,r_4)\ge0$, where
\begin{equation}\label{eq:overlap-endpoint-polynomial}
 G_\alpha(x,y,z)
 =1-2\alpha+(4\alpha-2)(x+y+z)
   +4(xy+xz+yz)+8\sqrt{\alpha xyz}.
\end{equation}
For $\alpha>1/2$, this expression is nondecreasing in each
nonnegative variable. Therefore
\eqref{eq:overlap-compressed-eigenvalues} and $r_4\le\lambda_4\le R$
imply $G_\alpha(\lambda_2,\lambda_3,R)\ge0$.
Write $a=\sqrt\alpha$, $b=\sqrt{\lambda_2}$,
$c=\sqrt{\lambda_3}$ and $d=\sqrt R$. Using
$a^2+b^2+c^2+d^2=1$, direct multiplication yields
\begin{equation}\label{eq:overlap-endpoint-factorization}
 G_\alpha(\lambda_2,\lambda_3,R)
 =(a+b+c-d)(a+b-c+d)(a-b+c+d)(-a+b+c+d).
\end{equation}
The first three factors are positive because
$a^2>1/2>b^2,c^2,d^2$. The last factor is consequently
nonnegative, which proves~\eqref{eq:overlap-endpoint-spectrum}.
This is a complete derivation from the two-qubit criterion.

\subsection{Projection, overlap, and transfer of the tail constraint}
\label{app:overlap-transfer}

We provide the remaining details of Lemma~\ref{lem:overlap-tail}.
Let $V$ swap the physical qubits, and let
$P_-=(\id-V)/2=\proj{\Psi^-}$, with
$\Ps=(\id+V)/2$. For a state $\rho$ with both marginals $\id/2$
and $f=\Tr(P_-\rho)<1$, the one-dimensional antisymmetric subspace
gives $P_-\rho P_-=fP_-$. Cancelling the off-diagonal symmetric and
antisymmetric blocks gives
\begin{equation}\label{eq:overlap-swap-blocks}
 \rho+V\rho V=2\Ps\rho\Ps+2fP_-.
\end{equation}
After identifying the two qubit spaces in the computational basis,
the partial trace over $B$ of the left-hand side is
$\rho_A+\rho_B=\id$. Since $\Tr_B P_-=\id/2$, it follows that
\begin{equation}\label{eq:overlap-projected-marginal}
 \Tr_B(\Ps\rho\Ps)=\frac{1-f}{2}\id,
 \qquad
 \Tr_A(\Ps\rho\Ps)=\frac{1-f}{2}\id.
\end{equation}
Consequently the normalized vector $\ket\chi$ in
\eqref{eq:overlap-symmetric-projection} has physical marginals
$\id/2$ and satisfies $V\ket\chi=\ket\chi$.
The state $\Tr_{B'}\proj\chi$ on $A'AB$ is therefore a symmetric
extension of $\sigma_{A'A}=\Tr_{BB'}\proj\chi$.
Applying~\eqref{eq:overlap-endpoint-spectrum}, with $X=A'$ and $Y=A$,
to the eigenvalues $t_i^2$ of this marginal gives
\eqref{eq:overlap-endpoint-halfspace}.

\paragraph{The Schmidt-overlap bound by finite sums.}
Write the two Schmidt decompositions across $AA':BB'$ as
\begin{equation*}
 \ket\psi=\sum_i s_i\ket{u_i}\ket{v_i},\qquad
 \ket\chi=\sum_j t_j\ket{a_j}\ket{b_j},
\end{equation*}
with $s_i,t_j$ decreasing and nonnegative. Enlarge the ancillary
spaces by unused dimensions if needed, and pad by zeros to a common
finite length $n\ge4$. Put
$D_{ij}=|\langle u_i|a_j\rangle|\,|\langle v_i|b_j\rangle|$.
The triangle inequality gives
\begin{equation}\label{eq:overlap-double-sum}
 |\langle\psi|\chi\rangle|\le\sum_{i,j}s_it_jD_{ij}.
\end{equation}
Every row sum and every column sum of $D$ is at most one.
For example, Cauchy--Schwarz and orthonormality give
\begin{equation*}
 \sum_j D_{ij}\le
 \left(\sum_j|\langle u_i|a_j\rangle|^2\right)^{1/2}
 \left(\sum_j|\langle v_i|b_j\rangle|^2\right)^{1/2}\le1.
\end{equation*}
Set $z_i=\sum_jD_{ij}t_j$. For each $k$, the numbers
$c_j=\sum_{i=1}^kD_{ij}$ satisfy $0\le c_j\le1$ and
$\sum_jc_j\le k$. Hence
\begin{align}\label{eq:overlap-partial-sums}
 \sum_{i=1}^k z_i
 &=\sum_jc_jt_j\notag\\
 &\le\sum_{j=1}^k c_jt_j+t_k\sum_{j>k}c_j
 \le\sum_{j=1}^k\bigl[c_jt_j+(1-c_j)t_k\bigr]
 \le\sum_{j=1}^k t_j.
\end{align}
Taking $s_{n+1}=0$ and expanding finite sums now gives
\begin{equation}\label{eq:overlap-summation-by-parts}
 \sum_i s_i z_i
 =\sum_{k=1}^n(s_k-s_{k+1})\sum_{i=1}^kz_i
 \le\sum_{k=1}^n(s_k-s_{k+1})\sum_{i=1}^kt_i
 =\sum_i s_it_i.
\end{equation}
Together with~\eqref{eq:overlap-double-sum}, this proves
$|\langle\psi|\chi\rangle|\le\sum_i s_it_i$ using only
Cauchy--Schwarz, ordering, and finite sums.

\paragraph{The four-dimensional geometry.}
The vectors $S,T$ in~\eqref{eq:overlap-grouped-vectors} have norm one,
and Cauchy--Schwarz on the tails gives
$\sum_i s_it_i\le S\cdot T$.
For the unit vector $e=(1,-1,-1,-1)/2$,
\eqref{eq:overlap-endpoint-halfspace} gives $\eta=e\cdot T\le0$.
If $\ell=e\cdot S>0$, orthogonal decomposition yields
\begin{equation}\label{eq:overlap-complete-geometric-transfer}
 \sqrt{1-f}\le S\cdot T
 \le\ell\eta+\sqrt{1-\ell^2}\sqrt{1-\eta^2}
 \le\sqrt{1-\ell^2}.
\end{equation}
Both outer terms are nonnegative, so squaring gives $\ell^2\le f$.
If $\ell\le0$, the desired bound is immediate. Thus
$s_1-s_2-s_3-\norm{s_{\ge4}}_2=2\ell\le2\sqrt f$ in all cases.
This completes the proof of Lemma~\ref{lem:overlap-tail}.

\subsection{An elementary stationary-ratio estimate}
\label{app:overlap-stationary-proof}

The following estimate excludes unequal positive tails in the entropy
minimization. Its proof uses convexity and concavity to bound one
stationary amplitude ratio by an affine function of the other, then
checks a quadratic inequality with the exact tangent slope.

\begin{appendixlemma}\label{lem:overlap-stationary}
Let $u>t>1$ satisfy
\begin{equation}\label{eq:overlap-stationary-ratio}
\frac{u\ln u}{u+1}=\frac{t\ln t}{t-1}.
\end{equation}
Then
\begin{equation}\label{eq:overlap-stationary-bound}
\frac{u-t-2}{\sqrt{u^2+t^2+2}}<\frac14.
\end{equation}
\end{appendixlemma}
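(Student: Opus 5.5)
The plan is to turn the transcendental stationarity relation~\eqref{eq:overlap-stationary-ratio} into an affine upper bound $u\le\alpha t+\beta$, and then check~\eqref{eq:overlap-stationary-bound} as an elementary quadratic inequality in $t$. Write $g(u)=u\ln u/(u+1)$ and $h(t)=t\ln t/(t-1)$.

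First I record the shapes of the two functions. Both are strictly increasing. As $t\downarrow1$ one has $h(t)\to1$, and $g(u)=1$ at a unique $u_1\approx3.59$, so $u$ is a well-defined increasing function of $t$. I will verify that $g$ is convex on $u>1$ and $h$ is concave on $t>1$ by computing second derivatives. Writing $g=\ln u-\ln u/(u+1)$ and $h=\ln t+\ln t/(t-1)$ helps here. The remaining sign questions reduce to one-variable inequalities of the type $\ln x$ versus a rational function, which I will settle by differentiating once more.

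Next comes the tangent comparison. Concavity of $h$ gives $h(t)\le h(t_0)+h'(t_0)(t-t_0)$ for any anchor $t_0$. Convexity of $g$ gives $g(u)\ge g(u_0)+g'(u_0)(u-u_0)$ for any anchor $u_0$. Since $g(u)=h(t)$, these combine to
\[
 u\le u_0+\frac{h(t_0)-g(u_0)+h'(t_0)(t-t_0)}{g'(u_0)}=:\alpha t+\beta .
\]
I will choose $t_0$ and $u_0$ as exact rational or closed-form points. A natural choice is $u_0$ with $g(u_0)=h(t_0)$, so that the bound is tight near the region where the ratio in~\eqref{eq:overlap-stationary-bound} is largest. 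Heuristically that ratio vanishes as $t\to\infty$, because $u-t\sim2\ln t$, and it is about $0.15$ as $t\downarrow1$, so the maximum sits at moderate $t$.

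Finally I reduce to a quadratic. If $u-t-2\le0$ the claim is trivial. Otherwise it suffices to show $16(u-t-2)^2<u^2+t^2+2$. The left-minus-right expression is a convex quadratic in $u$ for fixed $t$, so it is increasing in $u$ once $u-t-2>0$. Hence it suffices to check the inequality at $u=\alpha t+\beta$ whenever that value exceeds $t+2$. This yields a quadratic inequality in $t$, which I will verify by a discriminant computation using exact slope values.

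The main obstacle is that a single tangent pair may give $\alpha>1$ strictly. In that case the affine bound eventually fails to control $u-t$, whose true growth is only logarithmic. If this happens, I will split $t>1$ into two ranges. On a bounded range I use the tangent bound. For large $t$ I use the crude estimates $g(u)\ge\ln u-\ln u/(u+1)$ and $h(t)\le\ln t+\ln t/(t-1)$, which give $u\le t\,e^{O(\ln t/t)}$, and from this directly $u-t-2<\tfrac14\sqrt{u^2+t^2+2}$.
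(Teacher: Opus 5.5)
Your architecture (a tangent comparison giving an affine bound $u\le L(t)$, then a quadratic check in $t$) matches the paper's, but the step it rests on is false: $g(u)=u\ln u/(u+1)$ is \emph{not} convex on $u>1$. From $g'(u)=(u+1+\ln u)/(u+1)^2$ one gets
\[
 g''(u)=-\frac{(u^2-1)/u+2\ln u}{(u+1)^3}<0\qquad(u>1),
\]
so $g$ is strictly concave there, consistent with $g(u)\sim\ln u$. For example, $g(2)+g(6)<2g(4)$. Hence $g(u)\ge g(u_0)+g'(u_0)(u-u_0)$ fails for every $u\ne u_0$, whatever anchors you choose. Both tangent lines then lie above their graphs, and from $g(u)=h(t)$ you can only extract a \emph{lower} bound on $u$; the second-derivative check you planned would expose this.

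The missing idea is to exponentiate. With $F=e^{g}$ and $G=e^{h}$ the relation becomes $F(u)=G(t)$, and
\[
 \frac{F''}{F}=g''+(g')^2=\frac{(u+1)^2+u(\ln u)^2}{u(u+1)^4}>0,\qquad
 \frac{G''}{G}=h''+(h')^2=\frac{t(\ln t)^2-(t-1)^2}{t(t-1)^4}\le0,
\]
where the last sign follows from $\sqrt t\ln t\le t-1$. Tangents at the exact matched anchors $F(8)=G(4)=2^{8/3}$ then give $u\le 8+k(t-4)$ with $k=h'(4)/g'(8)=3(3-2\ln2)/(3+\ln2)$. Because the common factor $F(8)=G(4)$ cancels, this is exactly the line your recipe would produce with $(t_0,u_0)=(4,8)$: the line is right, but your derivation of it is not.

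Two smaller points. First, your worry about a slope above $1$ is unfounded. The target ratio is scale-invariant, and along $u=kt+O(1)$ it tends to $(k-1)/\sqrt{k^2+1}$. This limit is below $1/4$ whenever $15k^2-32k+15<0$, i.e.\ approximately $0.695<k<1.438$. Since $13/10<k<7/5$, one line covers all $t>1$ and no large-$t$ splitting is needed. The paper checks the resulting quadratic in $t-1$ coefficientwise, using rational bounds on $\ln2$. Second, $16(u-t-2)^2-(u^2+t^2+2)$ is not increasing on $u>t+2$, since its vertex is at $u=16(t+2)/15$. The reduction survives because this convex quadratic is negative at $u=t+2$, so negativity at $u=L(t)$ propagates to all of $[t+2,L(t)]$. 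Alternatively, as the paper does, use that $(u-t-2)/\sqrt{u^2+t^2+2}$ is itself increasing in $u$.
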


\begin{proof}
For $x>1$, define
\begin{equation}\label{eq:overlap-ratio-functions}
F(x)=\exp\!\left(\frac{x\ln x}{x+1}\right),\qquad
G(x)=\exp\!\left(\frac{x\ln x}{x-1}\right).
\end{equation}
Direct differentiation gives
\begin{align}\label{eq:overlap-ratio-curvature}
\frac{F''(x)}{F(x)}
&=\frac{(x+1)^2+x(\ln x)^2}{x(x+1)^4}>0,\notag\\
\frac{G''(x)}{G(x)}
&=\frac{x(\ln x)^2-(x-1)^2}{x(x-1)^4}\le0.
\end{align}
For the second sign, put $y=\sqrt x$. The function
$y^2-1-2y\ln y$ vanishes at $1$ and has derivative
$2(y-1-\ln y)\ge0$ for $y\ge1$. Thus
$\sqrt x\ln x\le x-1$. Also
$F'(x)/F(x)=(x+1+\ln x)/(x+1)^2>0$.

The hypothesis is $F(u)=G(t)$, and
$F(8)=G(4)=2^{8/3}$. Convexity of $F$ and concavity of $G$ give
\begin{equation}\label{eq:overlap-ratio-tangents}
F(8)+F'(8)(u-8)\le F(u)=G(t)
\le G(4)+G'(4)(t-4).
\end{equation}
After cancellation and division by $F'(8)>0$,
\begin{equation}\label{eq:overlap-ratio-slope}
u\le8+k(t-4),\qquad
k=\frac{G'(4)}{F'(8)}=\frac{3(3-2\ln2)}{3+\ln2}.
\end{equation}
Keep this exact value of $k$ and put $L(t)=8+k(t-4)$.
Direct expansion gives
\begin{align}\label{eq:overlap-ratio-polynomial}
&L(t)^2+t^2+2-16\bigl(L(t)-t-2\bigr)^2\notag\\
&\quad=(-15k^2+32k-15)(t-1)^2
 +\left[90\left(k-\frac43\right)^2+2\right](t-1)\notag\\
&\qquad\quad{}-135k^2+432k-333>0.
\end{align}
To check the signs analytically, the elementary bounds
$2/3<\ln2<25/36$ give $13/10<k<7/5$.
The quadratic and constant coefficients in~\eqref{eq:overlap-ratio-polynomial}
are concave functions of $k$; evaluating them at the endpoints of this
interval bounds them below by $2/5$ and $9/20$, respectively.
The linear coefficient is visibly positive. The logarithm bounds need
only the identity $\ln2=2\int_0^{1/3}(1-s^2)^{-1}\,ds$ and the
inequality $1<(1-s^2)^{-1}<1+9s^2/8$ for $0<s<1/3$.

For fixed $t>1$, the derivative with respect to $u>0$ of
$(u-t-2)/\sqrt{u^2+t^2+2}$ is
\begin{equation*}
\frac{t^2+2+u(t+2)}{(u^2+t^2+2)^{3/2}}>0.
\end{equation*}
Thus $u\le L(t)$ and~\eqref{eq:overlap-ratio-polynomial} imply
\begin{equation*}
\frac{u-t-2}{\sqrt{u^2+t^2+2}}
\le\frac{L(t)-t-2}{\sqrt{L(t)^2+t^2+2}}<\frac14.
\end{equation*}
This proves the lemma.
\end{proof}

\subsection{Proof of Lemma~\ref{lem:overlap-entropy}}
\label{app:overlap-entropy-proof}

\begin{proof}[Proof of Lemma~\ref{lem:overlap-entropy}]
Multiplying entropy by $\ln2$
does not change its minimizers, so use the natural-log entropy
$\mathcal H(q)=-\sum_iq_i\ln q_i$. The case $c=1$ is immediate:
$A_c=1$, $B_c=0$, and entropy is nonnegative. Assume $1/4\le c<1$.

\paragraph{Existence and an active constraint.}
The feasible set is closed in the probability simplex and therefore
compact. Entropy is continuous, including at zero probabilities, so a
minimum exists. The fixed vector
\begin{equation}\label{eq:overlap-trial-vector}
q_*=(5/6,1/18,1/18,1/18)
\end{equation}
is feasible for every $c\ge1/4$, because
\begin{equation*}
\sqrt{5/6}-\frac1{\sqrt2}
<\frac{11}{12}-\frac23=\frac14.
\end{equation*}
The two strict square-root estimates follow by squaring. Moreover,
\begin{equation}\label{eq:overlap-trial-entropy}
6\mathcal H(q_*)=\ln\!\left(18(6/5)^5\right)<\ln64=6\ln2,
\end{equation}
where $64\cdot5^5-18\cdot6^5=60032>0$.
If all probabilities are at most $1/2$, their entropy is at least one
bit. Thus every minimizer has $q_0>1/2$ and hence $q_0>q_i$ for
$i=1,2,3$. Also $q_0<1$, because the pure vector violates the
constraint when $c<1$.

If the constraint were strict at a minimizer, the perturbation
\begin{equation}\label{eq:overlap-concentrating-path}
q_0(\varepsilon)=q_0+\varepsilon(1-q_0),\qquad
q_i(\varepsilon)=(1-\varepsilon)q_i\quad(i=1,2,3)
\end{equation}
would remain feasible for sufficiently small positive $\varepsilon$.
Its entropy derivative at zero is
\begin{equation*}
\sum_{i:q_i>0}q_i\ln\frac{q_i}{q_0}<0.
\end{equation*}
At least one tail is positive, and each is smaller than $q_0$.
This contradicts minimality. Writing the amplitudes as
$(a,b_1,b_2,b_3)$, every minimizer therefore satisfies
\eqref{eq:overlap-active-amplitudes}, with $a>b_i$.

\paragraph{An explicit feasible variation.}
Choose two tail amplitudes $b>0$ and $d\ge0$, and hold the third
amplitude $e$ fixed. Replace $d$ by $d+\varepsilon$ and define
\begin{align}\label{eq:overlap-feasible-curve}
k(\varepsilon)&=a-b+\varepsilon,\notag\\
D(\varepsilon)&=2\bigl(a^2+b^2+d^2-(d+\varepsilon)^2\bigr)
-k(\varepsilon)^2,\notag\\
a(\varepsilon)&=\frac{k(\varepsilon)+\sqrt{D(\varepsilon)}}2,
\qquad
b(\varepsilon)=\frac{-k(\varepsilon)+\sqrt{D(\varepsilon)}}2.
\end{align}
At zero, $D(0)=(a+b)^2>0$ and these amplitudes equal $a,b$.
They remain positive nearby, and preserve both equalities exactly:
\begin{align*}
a(\varepsilon)-b(\varepsilon)-(d+\varepsilon)-e&=c,\\
a(\varepsilon)^2+b(\varepsilon)^2+(d+\varepsilon)^2+e^2&=1.
\end{align*}
Because $a$ is strictly larger than all the original tails, it remains
the leading amplitude for a sufficiently small variation. The derivatives
at zero are
\begin{equation}\label{eq:overlap-curve-derivatives}
a'(0)=\frac{b-d}{a+b},\qquad
b'(0)=-\frac{a+d}{a+b}.
\end{equation}
The amplitude entropy $h(x)=-x^2\ln(x^2)$ has derivative
$h'(x)=-4x\ln x-2x$ for $x>0$, and $h'(0)=0$ by continuous extension.

\paragraph{No tail entry vanishes.}
If a minimizing tail has $d=0$, choose another tail $b>0$; one exists
because $a<1$. For small $\varepsilon>0$,
\eqref{eq:overlap-feasible-curve} is feasible. Using~\eqref{eq:overlap-curve-derivatives}
and $h'(0)=0$, its entropy derivative is
\begin{equation}\label{eq:overlap-boundary-descent}
\left.\frac{\mathrm d}{\mathrm d\varepsilon}
\bigl[h(a(\varepsilon))+h(b(\varepsilon))+h(\varepsilon)+h(e)\bigr]
\right|_{\varepsilon=0^+}
=-\frac{4ab}{a+b}\ln\frac ab<0.
\end{equation}
This contradicts minimality, so all three tail amplitudes are positive.

\paragraph{Unequal positive tails are impossible.}
Suppose the largest tail $b$ exceeds the smallest tail $d>0$, with
$e$ the remaining tail. Then $a>b>d>0$ and $e\ge d$.
The curve~\eqref{eq:overlap-feasible-curve} is now feasible for both signs of
sufficiently small $\varepsilon$. Its derivative at a minimum must
vanish. Cancelling the normalization terms in the entropy derivative gives
\begin{equation}\label{eq:overlap-tail-stationary-identity}
ab\ln(a/b)=ad\ln(a/d)+bd\ln(b/d).
\end{equation}
Set $u=a/d$ and $t=b/d$, so $u>t>1$. Dividing the identity by $d^2$
and rearranging gives~\eqref{eq:overlap-stationary-ratio}. Normalization and
$e\ge d$ imply
\begin{equation*}
d^2(u^2+t^2+2)=a^2+b^2+2d^2\le1,
\qquad c=a-b-d-e\le d(u-t-2).
\end{equation*}
Since $c>0$, the numerator $u-t-2$ is positive. Lemma~\ref{lem:overlap-stationary}
therefore gives
\begin{equation}\label{eq:overlap-stationary-contradiction}
c=a-b-d-e\le d(u-t-2)
\le\frac{u-t-2}{\sqrt{u^2+t^2+2}}<\frac14,
\end{equation}
contradicting $c\ge1/4$. All three tail amplitudes must coincide.

Solving $a^2+3b^2=1$ and $a-3b=c$ with nonnegative amplitudes gives
\begin{equation*}
a=\frac{c+\sqrt{12-3c^2}}4=A_c,\qquad
b=\frac{A_c-c}{3}=B_c.
\end{equation*}
This vector is feasible and is the only possible minimizing spectrum
with the specified leading entry. Dividing by $\ln2$ proves
\eqref{eq:overlap-scalar}, its attainment, and the uniqueness assertion for $c<1$.
\end{proof}

\section{A counterexample to Lemma~2 of 
 \cite{Krohn}}\label{app:error}

We prove the counterexample to Lemma~2 of 
 \cite{Krohn}
and identify the error in the proof of Lemma~2 of 
 \cite{Krohn}.
Let
\begin{equation*}
\ket\psi=\sum_\alpha\sqrt{p_\alpha}\ket{x_\alpha}_{AA'}
\ket{y_\alpha}_{BB'}
\end{equation*}
be a purification of $W(f)$, where $0\le f<1/4$, written in an
ordered Schmidt decomposition across $AA':BB'$, with
$p_1\ge p_2\ge\cdots$.
Compress the physical Pauli operators to the two Schmidt bases:
\begin{equation}\label{eq:XY}
X^{(j)}_{\alpha\beta}
=\bra{x_\alpha}(\sigma_j\otimes\id)\ket{x_\beta},\qquad
Y^{(k)}_{\alpha\beta}
=\bra{y_\alpha}(\sigma_k\otimes\id)\ket{y_\beta}.
\end{equation}
The physical correlation matrix is
\begin{equation}\label{eq:corr}
\sum_{\alpha,\beta}\sqrt{p_\alpha p_\beta}
X^{(j)}_{\alpha\beta}Y^{(k)}_{\alpha\beta}
=\frac{1-4f}{3}\delta_{jk}.
\end{equation}
Let $R$ be the part of this sum for which at least one index exceeds two.
With $s=p_1+p_2$, $\varepsilon=1-s$, and $z=p_3$, the bound claimed
in Lemma~2 of that preprint, at its parameter $f=1/200$, is
\begin{equation}\label{eq:falsebound}
\norm R_{\mathrm{op}}\le\varepsilon+2sz.
\end{equation}

\begin{proposition}\label{prop:counterexample}
For every $0\le f<1/4$, the canonical purification of $W(f)$ admits
an ordered Schmidt decomposition across $AA':BB'$ for which
\begin{equation}\label{eq:counterexample}
\norm R_{\mathrm{op}}>\varepsilon+2sz.
\end{equation}
In particular, $f=1/200$ disproves Lemma~2 of version~1 of
 \cite{Krohn} within its stated domain, and $f=1/64$ gives a
counterexample at our main comparison point.
\end{proposition}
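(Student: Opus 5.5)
The plan is to work entirely with the explicit Bell-basis Schmidt decomposition~\eqref{eq:recouple} of the canonical purification. I will order it so that one diagonal entry of $R$ is the \emph{entire} correlation sum. Then \eqref{eq:corr} gives that entry exactly, and no sign bookkeeping is needed.

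\textbf{Step 1: the compressed Pauli matrices.} Take the Schmidt vectors on $AA'$ to be the Bell vectors $\ket{x_\mu}=(\sigma_\mu\otimes\id)\ket{\Phi^+}$, up to phases, and likewise on $BB'$, with $\sigma_0=\id$. Order them as $\mu=0$ first, with amplitude $a_f$, followed by the three others with amplitude $b_f$. By~\eqref{eq:recouple} this is a legitimate ordered Schmidt decomposition for $0\le f<1/4$, since $a_f>b_f>0$ there.

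For Bell vectors, $\bra{x_\mu}(\sigma_j\otimes\id)\ket{x_\nu}=\tfrac12\Tr(\sigma_\mu\sigma_j\sigma_\nu)$ up to phases. This has modulus one when $\sigma_\mu\sigma_\nu\propto\sigma_j$ and vanishes otherwise. So each $X^{(j)}$, and likewise each $Y^{(k)}$, is supported on a perfect matching of the four indices: $\{0,j\}$ together with the complementary pair.

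\textbf{Step 2: locate $R$.} Label the Schmidt index $2$ as the Bell vector with $\mu=x$, and indices $3,4$ as $\mu=y,z$. For $j=k=y$, the nonzero terms of~\eqref{eq:corr} sit on the pairs $\{1,3\}$ and $\{2,4\}$. Every such term has an index exceeding two, so $R_{yy}$ equals the full sum, namely $(1-4f)/3$.

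Hence $\norm R_{\mathrm{op}}\ge|R_{yy}|=(1-4f)/3$. Off-diagonal entries of $R$ need not be computed, because the operator norm dominates any diagonal entry.

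\textbf{Step 3: the scalar comparison.} The ordered spectrum is $p_1=a_f^2$ and $p_2=p_3=p_4=b_f^2$. This gives $s=1-2b_f^2$, $\varepsilon=2b_f^2$ and $z=b_f^2$, so
\[
\varepsilon+2sz=4b_f^2(1-b_f^2).
\]
Put $u=\sqrt{(1-f)/3}$ and $v=\sqrt f$, so that $b_f=(u-v)/2$, and note $(1-4f)/3=u^2-v^2$. The desired inequality becomes
\[
(u-v)(u+v)>(u-v)^2(1-b_f^2).
\]
For $f<1/4$ we have $u>v$, so dividing by $u-v>0$ reduces this to $u+v>(u-v)(1-b_f^2)$. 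This holds strictly, since $u+v\ge u-v>0$ and $0<1-b_f^2<1$.

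Specializing to $f=1/200$ and $f=1/64$ gives the two stated instances. At $f=0$, for a sanity check, the two sides are $1/3$ and $11/36$.

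\textbf{Main obstacle.} I expect the only delicate point to be Step 2: confirming that the matching structure of $X^{(j)}$ and $Y^{(k)}$ is as claimed. The matching of $Y^{(k)}$ must be unaffected by the signs and phases in~\eqref{eq:recouple}, which can be absorbed into the $BB'$ vectors without changing supports. Once this holds, the identity $R_{yy}=(1-4f)/3$ follows from~\eqref{eq:corr} with no sign analysis.
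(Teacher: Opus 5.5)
Your proof is correct and follows the paper's argument. You choose Bell-vector Schmidt bases for the canonical purification so that one Pauli direction pairs each of the two leading vectors only with tail vectors. This makes a diagonal entry of $R$ equal the full correlation $(1-4f)/3$, and you then carry out the same scalar comparison $d(u+v)>d^2(1-d^2/4)$ with $d=u-v$.

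The differences are cosmetic. You take $\Psi^+$ as the second Schmidt vector and use $R_{yy}$, whereas the paper takes $\Phi^-$ and uses $R_{xx}$, and you justify the vanishing entries via $\tfrac12\Tr(\sigma_\mu\sigma_j\sigma_\nu)$. You might also add, as the paper does, that this purification uses qubit ancillas and so lies within the padded domain of the cited lemma.
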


\begin{proof}
Put $d=\sqrt{(1-f)/3}-\sqrt f>0$. By~\eqref{eq:recouple}, the
canonical purification has Schmidt probabilities across $AA':BB'$
\begin{equation}\label{eq:counterweights}
p_1=P=a_f^2=1-3r,\qquad p_2=p_3=p_4=r=b_f^2=\frac{d^2}{4}.
\end{equation}
Choose $x_1=y_1=\Phi^+$ and $x_2=y_2=\Phi^-$.
This is an allowed choice within the degenerate Schmidt subspace.
The operator $\sigma_x\otimes\id$ maps both $\Phi^+$ and $\Phi^-$
outside their span. Hence all leading $2\times2$ matrix elements in
the $xx$ correlation vanish. Equation~\eqref{eq:corr} gives
\begin{equation}\label{eq:Rlower}
\norm R_{\mathrm{op}}\ge|R_{xx}|=\frac{1-4f}{3}
=d\left(\sqrt{\frac{1-f}{3}}+\sqrt f\right).
\end{equation}
Since $s=P+r=1-2r$, $\varepsilon=2r$, and $z=r$,
\begin{equation}\label{eq:Rupperfalse}
|R_{xx}|-(\varepsilon+2sz)
=d\left(\sqrt{\frac{1-f}{3}}+\sqrt f\right)
-d^2\left(1-\frac{d^2}{4}\right)
=2\sqrt f\,d+\frac{d^4}{4}>0.
\end{equation}
At the main comparison point $f=1/64$, this becomes
\begin{equation}\label{eq:counterexample-point}
|R_{xx}|=\frac5{16},\qquad
r=\frac{11-\sqrt{21}}{128}<\frac1{16},\qquad
\varepsilon+2sz=4r(1-r)<\frac14<\frac5{16}.
\end{equation}
Taking $f=1/200$ instead supplies the counterexample in the precise
domain of the cited lemma. The purification uses two-dimensional
ancillas, so it also lies inside the padded four-dimensional domain
used in  \cite{Krohn}.
\end{proof}

The incorrect step in the earlier proof occurs in Appendix~A.3 of
 \cite{Krohn}, immediately after the displayed formula for the
cross terms. For fixed Pauli directions, those terms are
\begin{equation}\label{eq:cross}
R_{\mathrm{cross}}
=2\operatorname{Re}\sum_{i\le2,\,\alpha\ge3}
\sqrt{p_i p_\alpha}\,X_{i\alpha}Y_{i\alpha}.
\end{equation}
The proof asserts $|R_{\mathrm{cross}}|\le2\sqrt{B_XB_Y}$ with
$B_X=\sum_{i\le2,\alpha\ge3}p_i p_\alpha|X_{i\alpha}|^2$ and the
same definition for $B_Y$. Cauchy--Schwarz does not give this inequality:
using these two squared norms would place the weight $p_i p_\alpha$,
rather than $\sqrt{p_i p_\alpha}$, in the sum being bounded.
A valid generic estimate is $|R_{\mathrm{cross}}|\le2\sqrt{s\varepsilon}$.
This follows by applying Cauchy--Schwarz to the leading and tail vectors,
whose squared norms are $s$ and $\varepsilon$, and using that the
product of the two physical Pauli observables has operator norm one.

\setlength{\bibsep}{4pt}
\begingroup
\raggedright
\bibliographystyle{unsrtnat}
\bibliography{references}
\endgroup
\end{document}